\newtheorem{proposition}{Proposition}
\newcommand{\qA}{{\bf A}}
\title{Joint AP Selection and Power Allocation for Unicast-Multicast Cell-Free Massive MIMO}
\author{Mustafa S. Abbas, Zahra Mobini,~\IEEEmembership{Member,~IEEE,} Hien Quoc Ngo,~\IEEEmembership{Fellow,~IEEE},\\~Hyundong Shin,~\IEEEmembership{Fellow, IEEE,} and Michail Matthaiou,~\IEEEmembership{Fellow,~IEEE}
\thanks{This work was supported by the U.K. Engineering and Physical Sciences Research Council (EPSRC) (grants No. EP/X04047X/1 and EP/X040569/1). The work of Z.~Mobini and  H.~Q.~Ngo was supported by the U.K. Research and Innovation Future Leaders Fellowships under Grant MR/X010635/1. The work of M. Matthaiou was supported by the European Research Council (ERC) under the European Union’s Horizon 2020 research and innovation programme (grant agreement No. 101001331). The work of H. Q. Ngo and M. Matthaiou was also supported by a research grant from the Department for the Economy
Northern Ireland under the US-Ireland R\&D Partnership Programme. The work of  H. Shin was supported by the National Research Foundation of Korea (NRF) grant funded by the Korean government (MSIT) under RS-2025-00556064 and by the MSIT (Ministry of Science and ICT), Korea, under the ITRC (Information Technology Research Center) support program (IITP-2025-2021-0-02046) supervised by the IITP (Institute for Information \& Communications Technology Planning \& Evaluation). (\textit{Corresponding authors: Hien Quoc Ngo; Hyundong Shin}.)
}
\thanks{M. S. Abbas,  H.~Q.~Ngo, and  M. Matthaiou are with the Centre for Wireless Innovation (CWI), Queen’s University Belfast, BT3 9DT Belfast, U.K. (e-mails: \{malkhadhrawee01,  hien.ngo, m.matthaiou\}@qub.ac.uk). H.~Q.~Ngo is also with the Department of Electronic Engineering, Kyung Hee University, Yongin-si, Gyeonggi-do 17104, Republic of Korea.}
\thanks{Z.~Mobini is with the Centre for Wireless Innovation (CWI), Queen’s University Belfast, BT3 9DT Belfast, U.K., and also   with the Department of Electrical and Electronic Engineering, The University of Manchester, Manchester M13 9PL, U.K. (e-mail: zahra.mobini@manchester.ac.uk).}

\thanks{H.~Shin is with the Department of Electronics and Information Convergence Engineering,
Kyung Hee University,
1732 Deogyeong-daero, Giheung-gu,
Yongin-si, Gyeonggi-do 17104, Republic of Korea
(e-mail: hshin@khu.ac.kr).}

\thanks{Parts of this paper were presented at the 2024 IEEE GLOBECOM~\cite{Mustafa:2024:globe}.}
}
\newcommand{\MR}{\mathrm{MR}}
\newcommand{\ZF}{\mathrm{ZF}}
\begin{document}


\maketitle

\begin{abstract}
Joint unicast and multicast transmissions are becoming increasingly important in practical wireless systems, such as Internet of Things  networks. This paper investigates a cell-free massive multiple-input multiple-output system that simultaneously supports both transmission types, with multicast serving multiple groups.  Exact closed-form expressions for the achievable downlink spectral efficiency (SE) of both unicast and multicast users are derived for zero-forcing  and maximum ratio precoding designs. Accordingly, a weighted sum SE (SSE) maximization problem is formulated to jointly optimize the access point (AP) selection and power allocation. The optimization framework accounts for practical constraints, including the maximum transmit power per AP, fronthaul capacity limitations between APs and the central processing unit, 
and quality-of-service requirements for all users.
The resulting non-convex optimization problem is reformulated into a tractable structure, and an accelerated projected gradient (APG)-based algorithm is developed to efficiently obtain near-optimal solutions. As a performance benchmark, a successive convex approximation (SCA)-based algorithm is also implemented.
Simulation results demonstrate that the proposed joint optimization approach significantly enhances the SSE across various system setups and precoding strategies. In particular, the APG-based algorithm achieves substantial complexity reduction while maintaining competitive performance, making it well-suited for large-scale practical deployments.

\end{abstract}
 \begin{IEEEkeywords}
 Accelerated projected gradient (APG), cell-free massive multiple-input multiple-output (CF-mMIMO), joint unicast and multicast transmission, power control, user association. 
 \end{IEEEkeywords}

\section{Introduction}
Cell-free massive multiple-input multiple-output (CF-mMIMO) technology represents a significant advancement for next-generation wireless systems, including large-scale internet of things (IoT) deployments and diverse use cases such as smart cities,   environmental monitoring/surveillance,   and remote healthcare systems\cite{ngo2024ultra,Matthaiou:COMMag:2021,Mohammadi:Proc:2024,Zahra:IOT:2024}. 
In this architecture, numerous access points (APs) are distributed across a wide coverage area and collaboratively operate on the same time-frequency resources via time-division duplexing (TDD), enabling service provision to a large number of users without being constrained by traditional cell boundaries. The CF-mMIMO architecture employs backhaul links to connect central processing units (CPUs) and fronthaul links to connect APs and CPUs. CF-mMIMO offers several key advantages, including channel hardening, favorable propagation conditions, and enhanced macro-diversity \cite{Mohammadi:Proc:2024}. As a result, it enables reliable connectivity over extensive areas while achieving high energy efficiency (EE) and spectral efficiency (SE) \cite{ngo2024ultra,Hien:JWCOM:2017}. These benefits have recently attracted considerable research interest.
In \cite{Ngo:IEEE_G_Net:2018}, the authors investigated unicast CF-mMIMO systems with multi-antenna APs and proposed a successive convex approximation (SCA) method to optimize the total EE. Additionally, they introduced an AP selection strategy to further improve the system performance. In \cite{Farooq:TCOM:2021}, an accelerated projected gradient (APG) method was proposed to maximize various system-wide utility functions. The APG approach was also explored in \cite{zhang2022cell} for optimizing the power allocation in CF-mMIMO systems with simultaneous wireless information and power transfer.

Meanwhile, recent statistics indicate that the rapid growth of IoT applications for widespread connectivity is expected to result in over $75.4$ billion IoT-connected devices relying on wireless networks. To support this massive scale, multicast transmission has become increasingly essential in IoT networks, as numerous devices require the same updates or control messages. By reducing redundant transmissions, multicast communication ensures scalable and energy-efficient connectivity. Currently, multigroup multicast, alongside traditional unicast, accounts for $53.72\%$ of overall network traffic, highlighting the rising demand for multicast applications in modern wireless services\cite{Li:JIOT:2024,Zhou:JIOT:2025}. In response to this growing need, numerous studies have explored advanced multicast transmission strategies \cite{Min:TSP:2020,Alejandro:OJCOMS:2024,Farooq:EUSIPCO:2021,Doan:LCOMM:2017}. For instance, Dong \emph{et. al.,} in \cite{Min:TSP:2020}  developed an efficient algorithm to compute the multicast beamformer in a single-cell massive MIMO system using the SCA method, addressing both quality of service (QoS) and max-min fairness (MMF) optimization. A comparative analysis of the bisection and APG methods for MMF-based power allocation was presented in \cite{Farooq:EUSIPCO:2021}. In \cite{Alejandro:OJCOMS:2024}, a subgroup-centric multicast strategy was introduced for CF-mMIMO based on spatial channel characteristics, integrating centralized improved partial MMSE (IP-MMSE) processing with distributed conjugate beamforming. The performance of multigroup multicasting in CF-mMIMO systems under short-term power constraints on the beamformers was investigated in \cite{Doan:LCOMM:2017}. Additionally, \cite{Zhou:JSYST:2022} investigated the effects of multi-antenna users and low-resolution analog-to-digital and digital-to-analog converters (ADCs/DACs) on the multigroup multicast performance in CF-mMIMO systems.
However, these works focus exclusively on either unicast or multicast scenarios. In practical deployments—especially with the proliferation of massive access use cases—wireless systems should increasingly support both unicast and multicast services. Joint unicast-multicast transmission is particularly valuable in the evolution from fifth-generation (5G) wireless networks and beyond, as it enables more efficient utilization of spectrum resources in both access and backhaul slices \cite{Natalia:TVT:2024}. By accommodating both shared content and user-specific demands, these joint transmissions promise improved SE and resource management.

In the context of CF-mMIMO systems, resource allocation optimization has received significant research attention, with numerous studies focusing on strategies to efficiently manage the power budget and user scheduling to enhance the system performance. However, the majority of these works addressed either unicast or multicast transmissions separately. For instance, the studies in\cite{Hien:JWCOM:2017,Ngo:IEEE_G_Net:2018,Farooq:TCOM:2021,zhang2022cell,Sutton:TVT:2021,Tuan:TCOMM:2022} focused on unicast users, while those in \cite{Min:TSP:2020,Alejandro:OJCOMS:2024,Farooq:EUSIPCO:2021,Doan:LCOMM:2017,Zhou:JSYST:2022,Gouda:TWC:2024} considered only multicast scenarios.  Although each traffic type has been studied in isolation, the work in \cite{Sadeghi:TWC:2018} considered both unicast and multicast services but treats them separately, without a unified optimization framework. The development of efficient resource allocation strategies for joint unicast-multicast transmission in CF-mMIMO systems is therefore of considerable practical importance. Nevertheless, these systems introduce unique challenges due to increased problem dimensionality and complex interference patterns. The presence of both unicast and multicast users often results in multi-objective optimization problems (MOOPs) that are computationally intensive. Moreover, pilot contamination and inter-group interference are amplified with the number of multicast users.
\begin{table*}[t]
\centering
\caption{Our contributions compared to the state-of-the-art in joint unicast-multicast transmission}
\begin{tabular}{|c|c|c|c|c|c|c|}
\hline
\textbf{Feature}                       &\cite{Zhe:TCOM:2023} & \cite{Sadeghi:WCOM:2018} & \cite{Shadi:TSP:2022}& \cite{Li:TVT:2022}  &\cite{Tan:ISNCC:2021}   & \textbf{Our work} \\ 
\hline 
CF-mMIMO         & & &  &\checkmark  &\checkmark  &\checkmark                \\ 
\hline
 Closed-form SE expression for joint unicast-multicast  & &\checkmark  &\checkmark$^a$ & & \checkmark$^a$ & \checkmark                
 \\ \hline
Maximize SSE 
&\checkmark$^b$ &\checkmark$^b$ & &\checkmark$^b$   & 
 &\checkmark               \\ \hline
 Joint AP selection and power allocation 
& & & &   & 
 &\checkmark               \\ \hline
APG-based optimization approach & & &  &  &  &  \checkmark 
\\ \hline
\end{tabular}
\begin{tablenotes}
   \item \hspace{15em}$^a$only MR, $^b$maximized SSE for unicast only
\end{tablenotes}
\label{table1}
\end{table*}

Several works have investigated joint unicast-multicast transmission using various optimization frameworks \cite{Zhe:TCOM:2023,Sadeghi:WCOM:2018,Shadi:TSP:2022,Li:TVT:2022,Tan:ISNCC:2021}. For instance, \cite{Zhe:TCOM:2023} proposed a graph neural network-based beamforming scheme for a   multiple-input single-output (MISO) system under imperfect channel state information (CSI). In \cite{Sadeghi:WCOM:2018}, the authors formulated a MOOP to jointly maximize the SSE of unicast users and the MMF of multicast users using Pareto boundary techniques in a single-cell massive MIMO setup. Similarly, \cite{Shadi:TSP:2022} developed a fast algorithm using the alternating direction method of multipliers to minimize the total transmit power, decomposing the joint problem into unicast and multicast subproblems while ensuring QoS requirements. The work in \cite{Li:TVT:2022} considered a CF-mMIMO system for joint unicast-multicast communication, using deep learning and non-dominated sorting genetic algorithm II to optimize the system performance. Lastly, \cite{Tan:ISNCC:2021} focused on the EE of layered-division multiplexing for joint transmission using SCA and Dinkelbach’s method.

In parallel, user association and AP selection in CF-mMIMO have attracted attention for their potential to reduce signaling overhead and enhance scalability \cite{Hao:2023:EUSIPCO, hao2024joint, Yasseen:TWC:2023}. Deactivating APs with negligible impact on system performance can also contribute to significant energy savings \cite{Yige:TVT:2024, Thang:ICC:2020}, while \cite{Ngo:IEEE_G_Net:2018} highlighted that AP selection is a key method to reduce fronthaul and backhaul signaling, which are key limitations in CF-mMIMO systems. Although \cite{hao2024joint} presented a joint power allocation and AP selection scheme, their study was limited to unicast-only systems. Despite its growing importance, joint AP selection and power allocation in joint unicast-multicast CF-mMIMO systems remain underexplored, particularly under realistic conditions with  fronthaul constraints and heterogeneous QoS demands. This gap underscores the need for a unified and efficient approach to jointly optimize power allocation and AP selection in such systems.

Motivated by the aforementioned considerations, this paper investigates a joint unicast-multicast CF-mMIMO system and proposes a novel framework for joint power control and user association, leveraging the APG algorithm to enhance the SSE. The APG method is chosen due to its computational efficiency and low memory requirements, which make it particularly well-suited for large-scale wireless network deployments. To benchmark the effectiveness of the proposed APG-based scheme, we also implement the SCA algorithm. The main contributions of this paper are as follows:

\begin {itemize}
\item  We derive closed-form SE expressions for joint unicast-multicast CF-mMIMO systems employing maximum-ratio (MR) and zero-forcing (ZF) precoding schemes, using the use-and-then-forget bounding technique. These closed-form expressions incorporate the effects of imperfect CSI and power control. In contrast to the approximation in \cite{Li:TVT:2022}, our result is exact and for finite antenna arrays, thus better reflecting practical system conditions. 

\item We then formulate a weighted SSE maximization problem that jointly addresses power allocation and AP selection, considering  the per-AP power constraints, as well as user QoS requirements and  fronthaul limitations. We transform the complex binary non-convex optimization problem into a more tractable form involving only continuous variables. We then develop two solution methods based on the  APG  approach and  SCA. The APG method offers low computational complexity, whereas the SCA method achieves near-optimal performance at the cost of higher computational complexity. 

\item Our numerical results demonstrate that the proposed APG-based joint AP selection and power allocation scheme significantly enhances the SSE performance in joint unicast-multicast CF-mMIMO systems. Under specific SE and fronthaul constraints, the proposed method, employing ZF precoding, achieves up to an order-of-magnitude improvement in the SSE compared to equal power allocation combined with random AP selection heuristics. Furthermore, the results confirm that our APG-based approach enables efficient implementation of joint AP selection and power allocation in joint unicast-multicast CF-mMIMO systems, delivering performance comparable to SCA-based methods but with substantially lower computational complexity.  
\end {itemize}
Table \ref{table1} provides a  comparison of our paper’s contributions with those of related studies in the literature.

\textit{Notation:} The superscripts $(\cdot)^T$, $(\cdot)^\ast$, and $(\cdot)^H$ denote the transpose, conjugate, and conjugate-transpose, respectively. The symbols  
$\mathbf{I}_n$ and  $\mathbb{E}\{\cdot\}$ stand for the $n\times n$ identity matrix, and the statistical expectation, respectively. 
Finally, a  circular symmetric complex Gaussian variable having variance $\sigma^2$ is denoted by $\mathcal{CN}(0,\sigma^2)$. 

\section{System Model}\label{sysmod}
We consider a CF-mMIMO system with joint unicast and multi-group multicast transmissions, as illustrated in Fig.~\ref{fig:1}. The system consists of $N$ APs, each equipped with $L$ antennas, that serve simultaneously $U$ unicast users and $M$ multicast groups, where the $m$-th group includes $K_m$ users. The sets of $N$ APs, $U$ unicast users, $M$ multicast groups, and $K_m$ users in the $m$-th unicast group are denoted by $\mathcal{N}$, $\mathcal{U}$, $\mathcal{M}$, and $\mathcal{K}_m$, respectively. 
The  channel vector  between the $u$-th unicast user, $u \in \mathcal{U}$, and the $n$-th AP, $n \in \mathcal{N}$, is 
 \begin{equation} \label{eq:cnu}
 \mathbf c_{n,u} = \beta_{n,u}^{1/2} \mathbf h_{n,u}\in \mathbb{C}^{L\times1}.
\end{equation} 
Moreover, the channel between the $k_m$-th multicast user, $k_m \in \mathcal{K}_m$, of the $m$-th multicast group, $m \in \mathcal{M}$, and the $n$-th AP is
\begin{equation} \label{eq:tnmk}
 \mathbf t_{n,m,k} = \bar{\beta}_{n,m,k}^{1/2} \mathbf h_{n,m,k}\in \mathbb{C}^{L\times1}.
\end{equation}
In this context, $\beta_{n,u}$ and $\bar{\beta}_{n,m,k}$ denote the large-scale fading coefficients. Additionally, $\mathbf h_{n,u} \sim \mathcal{CN} (\mathbf 0,\mathbf I_{L})$ and $\mathbf h_{n,m,k} \sim \mathcal{CN} (\mathbf 0,\mathbf I_{L})$ represent the small-scale fading vectors.

\begin{figure}[t]
    \centering
    \includegraphics[width=0.7\linewidth]{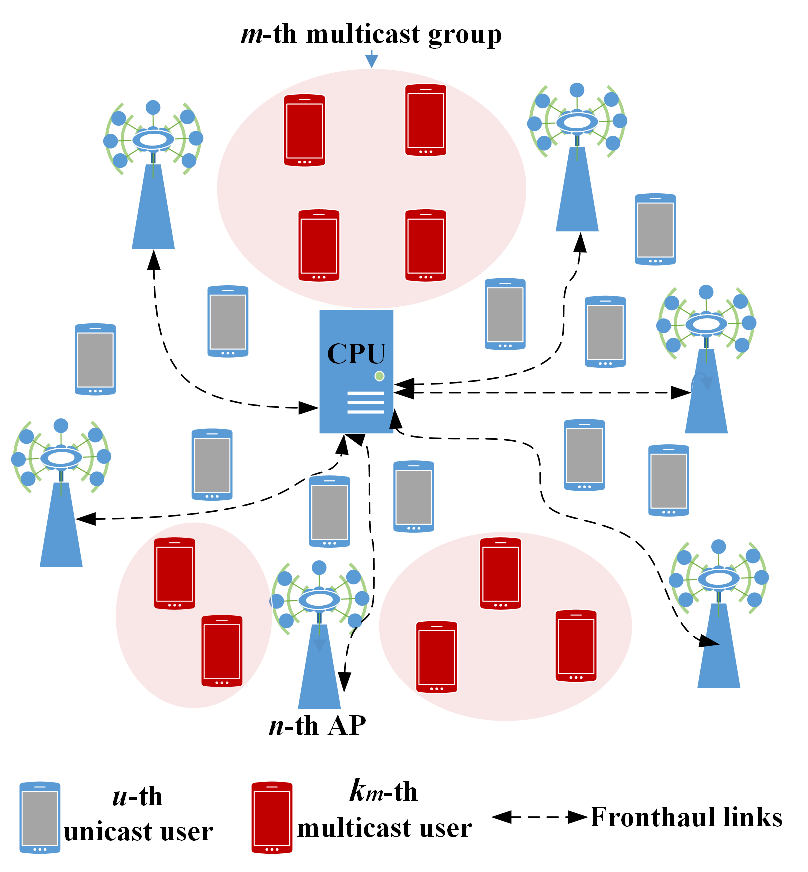}
    \caption{ CF-mMIMO system with joint unicast and multi-group multicast transmissions. }
    \label{fig:1}
\end{figure}

\subsection{Uplink Training}\label{up}
The system is assumed to work under the reciprocity-based TDD protocol, where the channels remain unchanged during a coherence interval $T$. The APs acquire the CSI through uplink training. We assume that the pilots dedicated to the unicast users are orthogonal. However, we assign a shared pilot to all the users in each multicast group. This is because in practice the length of the coherence interval is limited and each $\mathcal{K}_m$ group of users desires the same data \cite{yang:SPAWC:2013}. Therefore, our considered CF-mMIMO system requires $U + M$ orthogonal pilots. Let $ \boldsymbol{\phi}_{u}\in\mathbb{C}^{\tau  \times 1}$,  $\|\boldsymbol{\phi}_{u} \|^2 =1$, be the pilot sequence sent by the $u$-th unicast user, and $ \boldsymbol{\varphi}_{m}\in\mathbb{C}^{\tau  \times 1}$,  $\|\boldsymbol{\varphi}_{m} \|^2 =1$, be the pilot assigned to   all the multicast users in the $m$-th multicast group, while  the pilot length $\tau$ satisfies the condition $U+M \leq \tau \leq T$. Thus, we have $\boldsymbol{\phi}_{u}^H \boldsymbol{\phi}_{u'}=0$ for $u \neq u'$, $\boldsymbol{\phi}_{u}^H\boldsymbol{\varphi}_{m} =0$ and $\boldsymbol{\varphi}_{m}^H \boldsymbol{\varphi}_{m'} =0$ for $m \neq m'$.
 The received signal at the $n$-th AP during uplink training can be written as
\begin{align} \label{eq:ynp}
    \mathbf Y_{n,p}=&\sqrt{\tau p_{\text{ul}}}\sum\nolimits_{u \in \mathcal{U}}  \mathbf c_{n,u} \boldsymbol{\phi}_{u}^H\nonumber\\
    & + \sqrt{\tau p_{\text{ul}}} \sum\nolimits_{m \in \mathcal{M}}  \sum\nolimits_{k   \in \mathcal{K}_m}  \mathbf t_{n,m,k} \boldsymbol{\varphi}_{m}^H + \mathbf \Psi_{n,p},
\end{align}
where $\mathbf \Psi_{n,p}\in\mathbb{C}^{L  \times \tau}$ represents the additive white Gaussian noise and $p_{\text{ul}}$ denotes the uplink transmit power.
To estimate $\mathbf c_{n,u}$, we project the received signal $\mathbf Y_{n,p}$ onto the pilot sequence corresponding to the $u$-th unicast user, $\boldsymbol{\phi}_{u}$, as
\begin{equation} \label{eq:ynpu}
    \mathbf{\Check{y}}_{n,p,u} = \mathbf Y_{n,p} \boldsymbol{\phi}_{u} = \sqrt{\tau p_{\text{ul}}}  \mathbf c_{n,u}  + \boldsymbol{\psi'}_{n,p},
\end{equation}
where $\boldsymbol{\psi'}_{n,p}= \mathbf \Psi_{n,p} \boldsymbol{\phi}_{u}\sim \mathcal{CN} (\mathbf 0,\mathbf I_{L})$. Each AP estimates the user channel locally, thereby minimizing the backhaul signaling. The MMSE estimate of $\mathbf c_{n,u} \in \mathbb{C}^{L\times 1 }$ is 
\begin{align}
     \label{eq:cnuh}
 \mathbf {\hat{c}}_{n,u} & = \mathbb{E}\big\{\mathbf c_{n,u} \mathbf{\Check{y}}_{n,p,u}^H\big\} \Big(\mathbb{E} \big\{\mathbf{\Check{y}}_{n,p,u} \mathbf{\Check{y}}_{n,p,u}^H\big\}\Big)^{-1} \mathbf{\Check{y}}_{n,p,u} \nonumber \\
&= \frac{\sqrt{\tau p_{\text{ul}}}~ \beta_{n,u}}{\tau p_{\text{ul}}~ \beta_{n,u} + 1} \mathbf{\Check{y}}_{n,p,u} .
\end{align}
Then, the variance of the estimated channel $\mathbf {\hat{c}}_{n,u}$ is
\begin{equation} \label{eq:gamanu}
\mathbf\gamma_{n,u} = \mathbb{E} \big\{|[\mathbf {\hat{c}}_{n,u}]_l |^2\big\}  = \frac{\tau p_{\text{ul}}~ \beta_{n,u}^2}{\tau p_{\text{ul}}~ \beta_{n,u} + 1}.
\end{equation}
The estimation error of $\mathbf c_{n,u}$ is $\mathbf{\Tilde{c}}_{n,u} =\mathbf{c}_{n,u}-\mathbf{\hat{c}}_{n,u}\sim \mathcal{CN}(\mathbf0,(\beta_{n,u}-\gamma_{n,u}) \mathbf I_{L})$. 

Similarly, the MMSE estimate of the $k_m$-th multicast user $\mathbf t_{n,m,k}$ is given by 
\begin{align}
     \label{eq:tnmkh}
\mathbf {\hat{t}}_{n,m,k} & = \mathbb{E}\big\{\mathbf t_{n,m,k} \mathbf{\Check{y}}_{n,p,m}^H\big\} \Big(\mathbb{E} \big\{\mathbf{\Check{y}}_{n,p,m} \mathbf{\Check{y}}_{n,p,m}^H\big\}\Big)^{-1} \mathbf{\Check{y}}_{n,p,m} \nonumber \\
&= \Big(\sqrt{\tau p_{\text{ul}}} \mathbb{E}\big\{ \mathbf  t_{n,m,k} \mathbf t_{n,m,k}^H\big\} + \mathbb{E}\big\{\mathbf t_{n,m,k} (\boldsymbol{\psi}_{n,p}'')^H\big\}\Big) \nonumber\\
&\hspace{-0.9 em}\times\!\Big(\tau p_{\text{ul}} \mathbb{E} \big\{\!\!\! \sum_{t \in \mathcal{K}_m}\!\!\! \mathbf t_{n,m,t} \mathbf t_{n,m,t}^H  \big\}\! +\! \mathbb{E} \big\{ \boldsymbol{\psi}_{n,p}''(\boldsymbol{\psi}_{n,p}'')^H\big\}\!\Big)^{-1}\!\!\!\!\mathbf{\Check{y}}_{n,p,m} \nonumber \\
&= \frac{\sqrt{\tau p_{\text{ul}}}~ \bar{\beta}_{n,m,k}}{\tau p_{\text{ul}}~ \sum_{t \in \mathcal{K}_m} \bar{\beta}_{n,m,t} + 1} \mathbf{\Check{y}}_{n,p,m},
\end{align}
where $\mathbf{\Check{y}}_{n,p,m}$ is obtained by projecting the received signal \eqref{eq:ynp} on the pilot sequence $\boldsymbol{\varphi}_{m}$ assigned to the $m$-th multicast group, while $\boldsymbol{\psi}_{n,p}''=\mathbf \Psi_{n,p} \boldsymbol{\varphi}_{m}\sim \mathcal{CN} (\mathbf 0,\mathbf I_{L})$. Thus,
\begin{equation} \label{eq:ynpm}
    \mathbf{\Check{y}}_{n,p,m} = \mathbf Y_{n,p} \boldsymbol{\varphi}_{m} = \sqrt{\tau p_{\text{ul}}}  \sum\nolimits_{k \in \mathcal{K}_m} \mathbf t_{n,m,k}  + \boldsymbol{\psi}_{n,p}''.
\end{equation} 
The estimation error of $\mathbf t_{n,m,k}$ is $\mathbf{\Tilde{t}}_{n,m,k} =\mathbf{t}_{n,m,k}-\mathbf{\hat{t}}_{n,m,k} \sim \mathcal{CN}(\mathbf0,(\bar{\beta}_{n,m,k}-\bar{\mathbf\gamma}_{n,m,k}) \mathbf I_{L})$ and the variance of $\mathbf{\hat{t}}_{n,m,k} $ is
\begin{equation} \label{eq:gamanmk}
\bar{\mathbf\gamma}_{n,m,k} = \mathbb{E} \big\{|[\mathbf {\hat{t}}_{n,m,k}]_l |^2\big\} = \frac{\tau p_{\text{ul}}~ \bar{\beta}_{n,m,k}^2}{\tau p_{\text{ul}}~ \sum_{t \in \mathcal{K}_m} \bar{\beta}_{n,m,t} + 1}.
\end{equation}
By employing the co-pilot assignment strategy presented in \cite{yang:SPAWC:2013}, we obtain
\begin{align} \label{eq:tnmh}
\mathbf {\hat{t}}_{n,m} &= \sum\nolimits_{k \in \mathcal{K}_m}\mathbf{\hat{t}}_{n,m,k}
\nonumber\\
&= \frac{\sqrt{\tau p_{\text{ul}}}~ \sum_{k \in \mathcal{K}_m} \bar{\beta}_{n,m,k}}{\tau p_{\text{ul}}~ \sum_{k \in \mathcal{K}_m} \bar{\beta}_{n,m,k} + 1} \mathbf{\Check{y}}_{n,p,m},
\end{align}
which can be regarded as the channel estimate of the $m$-th multicast group. The mean square of $\mathbf {\hat{t}}_{n,m}$ is 
    \begin{equation} \label{eq:zetanm}
\mathbf\zeta_{n,m} = \mathbb{E} \Big\{\big|[\mathbf {\hat{t}}_{n,m}]_l \big|^2 \Big\} = \frac{\Big(\sqrt{\tau p_{\text{ul}}}~ \sum_{k \in \mathcal{K}_m}\bar{\beta}_{n,m,k}\Big)^2}{\tau p_{\text{ul}}~ \sum_{k \in \mathcal{K}_m} \bar{\beta}_{n,m,t} + 1}.
\end{equation}
\textit{\textbf{Remark 1:} Here, we assume Rayleigh fading channels, and hence, the MMSE channel estimation error follows a complex Gaussian distribution. This choice is reasonable in practical rich-scattering scenarios, and has been extensively used in the CF-mMIMO literature \cite{Ngo:IEEE_G_Net:2018,Hien:JWCOM:2017,Emil:TWC:2020}. However, in some practical scenarios,  the channels may not be Rayleigh faded, and hence, channel estimation errors may not follow an exact or known distribution. Factors, such as propagation environments, hardware impairments, and user mobility will introduce more complex forms of uncertainty. In this context, distributionally
robust optimization techniques—which optimize for the worst-case performance over a set
of possible error distributions—have recently gained attention in wireless communications
for their ability to handle uncertainties without relying on a specific error model \cite{He:TWC:2024}. Similarly, machine learning (ML)-based approaches, particularly data-driven CSI predictors or
uncertainty estimators, can capture environment-dependent patterns in channel variations
and provide more adaptive resource allocation strategies \cite{Wang:TSP:2025,Ge:JSYST:2022}.}
\subsection{Downlink Payload Data Transmission}
For the downlink transmission, we employ both MR and ZF precoding schemes. MR precoding is favored for its low computational complexity, relying only on local channel knowledge. This makes MR particularly suitable for large-scale CF-mMIMO deployments with limited per-AP processing capabilities \cite{Hien:JWCOM:2017,Ngo:IEEE_G_Net:2018,Li:TVT:2022}. ZF precoding, in contrast, requires higher computational complexity but mitigates interference more effectively, thereby significantly enhancing the SE. When the system has a surplus of antennas relative to the number of users, ZF can approach near-optimal performance \cite{Sadeghi:TWC:2018,Emil:TWC:2020}.
Nonetheless, ZF is feasible only when the total number of antennas
exceeds the total number of served unicast users and multicast groups. By considering both MR and ZF precoding, our study captures a meaningful range of trade-offs between complexity and performance.
We begin by defining the binary variables $a_{n,u}$ and $\bar{a}_{n,m}$ to indicate the AP–user associations for the $u$-th unicast user and the $m$-th multicast group, respectively. In particular, we define
\begin{subequations}
\begin{align}
\label{eq:anu}
a_{n,u} &\triangleq 
\begin{cases}
  1, &  \text{if unicast user $u$ is served by AP $n$,}\\
  0, & \mbox{othewise},
\end{cases} 
\\
\bar{a}_{n,m}   &\triangleq  
\begin{cases}
  \!1, &  \text{if multicast group $m$ is served  by  AP $n$,}\\
  \!0, &  \mbox{othewise}.
\end{cases} \label{eq:anm}
  \end{align}     
\end{subequations}
Let $q_u$ and $\bar{q}_m$ denote the symbols allocated to the $u$-th unicast user and the $m$-th multicast group users, respectively, with the conditions $\mathbb{E}\{|q_u|^2\}=\mathbb{E}\{|\bar{q}_m|^2\}=1$ and $\mathbb{E}\{q_u\}=\mathbb{E}\{\bar{q}_m\}=0$. Then, the transmitted signal from the $n$-th AP is
\begin{align}\label{eq:xn}
\boldsymbol{x}_{n}^{\varsigma} =&\sqrt{p_{\text{dl}}} \sum_{u \in \mathcal{U}} a_{n,u} \sqrt{\eta_{n,u}} \boldsymbol{b}_{n,u}^{\varsigma} q_u \nonumber\\
&+ \sqrt{p_{\text{dl}}} \sum_{m \in \mathcal{M}} \bar{a}_{n,m} \sqrt{\bar{\eta}_{n,m}} \bar{\boldsymbol{b}}_{n,m}^{\varsigma} \bar{q}_m,
\end{align}
where $\varsigma\in \{\MR,\ZF\}$ indicates the precoding scheme, i.e., $\varsigma=\MR$ implies the MR precoding and $\varsigma=\ZF$ corresponds to the ZF scheme, $p_{\text{dl}}$ is the maximum normalized transmit power at each AP, while $\eta_{n,u}$ and $\bar{\eta}_{n,m}$ are the power control coefficients allocated to the $u$-th unicast user and the $m$-th multicast group, respectively. Moreover, $\boldsymbol{b}_{n,u}^{\varsigma}$ and $\bar{\boldsymbol{b}}_{n,m}^{\varsigma}$ are the $u$-th unicast user and $m$-th multicast group precoding vectors respectively, given by 
\begin{equation}\label{eq:bnu}
\boldsymbol{b}_{n,u}^{\varsigma} \triangleq
\begin{cases}
  \mathbf{\hat{c}}_{n,u}, & \mbox{if}~~\varsigma = \MR ,\\
  \mathbf{\hat{G}}_n(\mathbf{\hat{G}}^H_n\mathbf{\hat{G}}_n)^{-1} \mathbf{e}_{u}, & \mbox{if}~~\varsigma = \ZF,
\end{cases} 
  \end{equation}
    and
\begin{align}\label{eq:bnm}
\bar{\boldsymbol{b}}_{n,m}^{\varsigma} \triangleq
\begin{cases}
  \mathbf{\hat{t}}_{n,m}, & \mbox{if}~~\varsigma = \MR,\\
  \mathbf{\hat{G}}_n(\mathbf{\hat{G}}^H_n\mathbf{\hat{G}}_n)^{-1} \bar{\mathbf{e}}_{m}, & \mbox{if}~~\varsigma = \ZF,
\end{cases} 
\end{align}
where $\mathbf{\hat{G}}_n=[\mathbf{\hat{c}}_{n,1},\dots,\mathbf{\hat{c}}_{n,U},\mathbf{\hat{t}}_{n,1},\dots,\mathbf{\hat{t}}_{n,M}]_{L \times (U+M)}$, $\mathbf{e}_{u}$ is the $u$-th column of the identity matrix $\mathbf{I}_{U+M}$, and $\bar{\mathbf{e}}_{m}$ is the $m$-th column of $\mathbf{I}_{U+M}$.
The power control coefficients $\eta_{n,u}$ and $\bar{\eta}_{n,m}$ are chosen to satisfy the power constraint 
\begin{align}\label{eq:xnp}
\mathbb{E}\big\{\|\boldsymbol{x}_{n}^{\varsigma}\|^2\big\} =& p_{\text{dl}}\sum_{u \in \mathcal{U}} a_{n,u} \eta_{n,u} \mathbb{E}\big\{\|\boldsymbol{b}_{n,u}^{\varsigma} \|^2\big\} \nonumber \\
& \!+p_{\text{dl}}\sum_{m \in \mathcal{M}} \bar{a}_{n,m} \bar{\eta}_{n,m} \mathbb{E}\big\{\| \bar{\boldsymbol{b}}_{n,m}^{\varsigma} \|^2\big\} \leq p_{\text{dl}}.
\end{align}
For the MR precoding, we have $\mathbb{E}\big\{\|\boldsymbol{b}_{n,u}^\MR\|^2\big\}= L ~\mathbf\gamma_{n,u} $ and $\mathbb{E}\big\{\|\bar{\boldsymbol{b}}_{n,m}^\MR \|^2\big\} = L~\mathbf\zeta_{n,m}$.
For ZF precoding, $\mathbb{E}\big\{\|\boldsymbol{b}_{n,u}^\ZF\|^2\big\} =\frac{1}{(L-U-M)\mathbf\gamma_{n,u}}$ and $\mathbb{E}\big\{\|\bar{\boldsymbol{b}}_{n,m}^\ZF\|^2\big\}=\frac{1}{(L-U-M)\mathbf\zeta_{n,m}}$. Therefore, the power constraint in \eqref{eq:xnp} is expressed as
\begin{align}\label{eq:xnvp}
\begin{cases}
    \!L\big(\!\!\sum\limits_{u \in \mathcal{U}} \!\!a_{n,u} \eta_{n,u} \mathbf\gamma_{n,u}\! + \!\!\!\! \sum\limits_{m \in \mathcal{M}}\!\!\! \bar{a}_{n,m} \bar{\eta}_{n,m} \mathbf\zeta_{n,m}\big) \leq 1, & \!\!\!\!\mbox{if}~~\varsigma \!=\! \MR,
    \\
    \!\frac{1}{(L-U-M)}\big(\!\!\sum\limits_{u \in \mathcal{U}} \!a_{n,u}  \frac{\eta_{n,u}}{\mathbf\gamma_{n,u}} + \!\!\!\sum\limits_{m \in \mathcal{M}} \!\!\bar{a}_{n,m}  \frac{\bar{\eta}_{n,m}}{\mathbf\zeta_{n,m}} \big)\leq1, & \!\!\!\!\mbox{if}~~\varsigma\! = \!\ZF.
\end{cases}
\end{align}
Then, the received signal at the $u$-th unicast user is given by
\begin{align}
    \label{eq:rumrb}
r_{u}^{\varsigma}&=\sqrt{p_{\text{dl}}} \sum_{n \in \mathcal{N}} a_{n,u} \sqrt{\eta_{n,u}} \mathbf{c}_{n,u}^H \boldsymbol{b}_{n,u}^\varsigma q_u\nonumber\\
&+ \sqrt{p_{\text{dl}}} \sum_{u' \in \mathcal {U},u' \neq u} \sum_{n \in \mathcal{N}} a_{n,u'} \sqrt{\eta_{n,u'}} \mathbf{c}_{n,u}^H \boldsymbol{b}_{n,u'}^\varsigma q_{u'} \nonumber\\
&+ \sqrt{p_{\text{dl}}} \sum_{m \in \mathcal{M}} \sum_{n \in \mathcal{N}} \bar{a}_{n,m} \sqrt{\bar{\eta}_{n,m}} \mathbf{c}_{n,u}^H \bar{\boldsymbol{b}}_{n,m}^\varsigma \bar{q}_m +{\psi}_{u},
\end{align}
where ${\psi}_{u}$ is the additive Gaussian noise $\mathcal{CN} (0,1)$ at the $u$-th unicast user. The received signal at the $k_m$-th multicast user can be written as
\begin{align}
\label{eq:rmkmrb}
r_{m,k}^{\varsigma}&=\sqrt{p_{\text{dl}}} \sum_{n \in \mathcal{N}} \bar{a}_{n,m} \sqrt{\bar{\eta}_{n,m}} \mathbf{t}_{n,m,k}^H \bar{\boldsymbol{b}}_{n,m}^\varsigma \bar{q}_m \nonumber\\
&+ \sqrt{p_{\text{dl}}} \sum_{m' \in \mathcal{M},m' \neq m} \sum_{n \in \mathcal{N}} \bar{a}_{n,m'} \sqrt{\bar{\eta}_{n,m'}} \mathbf{t}_{n,m,k}^H \bar{\boldsymbol{b}}_{n,m'}^\varsigma \bar{q}_{m'}\nonumber\\
&+ \sqrt{p_{\text{dl}}} \sum_{u \in \mathcal{U}} \sum_{n \in \mathcal{N}} a_{n,u} \sqrt{\eta_{n,{u}}} \mathbf{t}_{n,m,k}^H \boldsymbol{b}_{n,u}^\varsigma q_{{u}} + \psi_{m,k},
\end{align}
where $\psi_{m,k}$ is the additive Gaussian noise $\mathcal{CN} (0,1)$ at the $k_m$-th multicast user.

\section{Achievable Spectral Efficiency}\label{SEs}
In this section, we derive the SE achieved by the $u$-th unicast user and  the   $k_m$-th multicast user under   ZF and MR precoding schemes,  using the use-and-then-forget bounding technique \cite{Du:TCOM:2021}. In general, the achievable SE (bit/s/Hz) is  calculated as 
$$\mathrm{SE}_u^{\varsigma}  = \frac{T-\tau}{T} \log_{2}\big({1+\mathrm{SINR}_u^{\varsigma}} \big), ~ \text{ for unicast users},$$  
and 
$$\mathrm{SE}_{m,k}^{\varsigma}  = \frac{T-\tau}{T} \log_{2}\big({1+\mathrm{SINR}_{m,k}^{\varsigma}} \big), ~ \text{for multicast users},$$   
where $\mathrm{SINR}_u^{\varsigma}$ and $\mathrm{SINR}_{m,k}^{\varsigma}$ are given, respectively, as
\begin{equation} \label{eq:sinru}
\begin{split}
&\mathrm{SINR}_u^{\varsigma} = \\
&\frac{|{\text{DS}}_u^{\varsigma}|^2}{\mathbb{E} \{|\text{BU}_u^{\varsigma} |^2\} +\!\!\!\!\sum\limits_{u' \in \mathcal {U},u' \neq u}\!\!\!\! \mathbb{E} \{| \text{UI}_{u,u'}^{\varsigma}|^2\} + \!\!\!\!\sum\limits_{m \in \mathcal{M}}\! \mathbb{E} \{|\text{MI}_{u,m}^{\varsigma}|^2\} +1},
\end{split}
\end{equation}
and
\begin{equation} \label{eq:sinrmk}
\begin{split}
&\mathrm{SINR}_{m,k}^{\varsigma} = \\
&\frac{| {\text{DS}}_{m,k}^{\varsigma}|^2}{\mathbb{E} \{|\text{BU}_{m,k}^{\varsigma}|^2\}\! + \!\!\!\!\!\!\sum\limits_{m' \in \mathcal{M},m' \neq m}\!\!\!\!\!\!\!\!\!\! \mathbb{E} \{|\text{MI}_{m,k,m'}^{\varsigma} |^2\} \!+\!\!\!\sum\limits_{u \in \mathcal {U}}\!\! \mathbb{E} \{| \text{UI}_{m,k,u}^{\varsigma} |^2\} \!+\!\!1},
\end{split}
\end{equation}
where $\text{DS}_u^{\varsigma}$ and ${\text{DS}}_{m,k}^{\varsigma}$ are the desired signals, while $\text{BU}_u^{\varsigma}$ and $\text{BU}_{m,k}^{\varsigma}$ are the beamforming uncertainties for the unicast user and multicast user, respectively. Additionally, $\text{UI}_{u,u'}^{\varsigma}$, $\text{UI}_{m,k,u}^{\varsigma}$ and  $\text{MI}_{m,k,m'}^{\varsigma}$, $\text{MI}_{u,m}^{\varsigma}$  are the interfering signals from other unicast users and multicast groups, respectively.
\begin{figure*}
\begin{align}
\mathrm{SINR}_{u}^{\MR} &= \frac{\big( \sqrt{p_{\text{dl}}}~L~\sum\nolimits_{n \in \mathcal{N}} a_{n,u} \sqrt{\eta_{n,u}} \mathbf\gamma_{n,u}\big)^2}{ p_{\text{dl}}\big( L\sum\nolimits_{u' \in \mathcal {U}} \sum\nolimits_{n \in \mathcal{N}} a_{n,u'} \eta_{n,u'} \beta_{n,u} \gamma_{n,u'} +  L\sum\nolimits_{m \in \mathcal{M}} \sum\nolimits_{n \in \mathcal{N}} \bar{a}_{n,m} \bar{\eta}_{n,m}   \beta_{n,u}  \zeta_{n,m} \big)+1}\label{eq:sinrumra} ~\tag{23} \\[0.5em]
\mathrm{SINR}_u^{\ZF} &= \frac{\big( \sqrt{p_{\text{dl}}}\sum\nolimits_{n \in \mathcal{N}} a_{n,u} \sqrt{\eta_{n,u}}\big)^2}{p_{\text{dl}} \Big(  \sum\nolimits_{u' \in \mathcal{U}} \sum\nolimits_{n \in \mathcal{N}} a_{n,u'} \eta_{n,u'}  \frac{(\beta_{n,u}-\gamma_{n,u})}{(L-U-M)\mathbf\gamma_{n,u'}} +  \sum\nolimits_{m \in \mathcal{M}}  \sum\nolimits_{n \in \mathcal{N}} \bar{a}_{n,m} \bar{\eta}_{n,m}  \frac{(\beta_{n,u}-\gamma_{n,u})}{(L-U-M)\mathbf\zeta_{n,m}}\Big) +1}\label{eq:sinruzfa}  ~\tag{24}\\[0.5em]
\mathrm{SINR}_{m,k}^{\MR} &= \frac{( \sqrt{p_{\text{dl}}} ~ L\sum\nolimits_{n \in \mathcal{N}} \bar{a}_{n,m} \sqrt{\bar{\eta}_{n,m}} ~ \sqrt{\zeta_{n,m}\bar{\mathbf\gamma}_{n,m,k}} )^2}{ p_{\text{dl}} \big(L \sum\nolimits_{m' \in \mathcal{M}} \sum\nolimits_{n \in \mathcal{N}} \bar{a}_{n,m'} \bar{\eta}_{n,m'}\bar{\beta}_{n,m,k}\zeta_{n,m'} + L \sum\nolimits_{u \in \mathcal {U}} \sum\nolimits_{n \in \mathcal{N}} a_{n,u} \eta_{n,u} \bar{\beta}_{n,m,k} \gamma_{n,u}\big) +1}\label{eq:sinrmkmra} ~\tag{26}\\[0.5em]
\mathrm{SINR}_{m,k}^{\ZF} &= \frac{\big(\sqrt{p_{\text{dl}}}\sum\nolimits_{n \in \mathcal{N}} \bar{a}_{n,m} \sqrt{\bar{\eta}_{n,m}}\big)^2}{p_{\text{dl}} \Big(\sum\nolimits_{m' \in \mathcal{M}}\sum\nolimits_{n \in \mathcal{N}} \bar{a}_{n,m'} \bar{\eta}_{n,m'}  \frac{\bar{\beta}_{n,m,k}-\bar{\mathbf\gamma}_{n,m,k}}{(L-U-M)\mathbf\zeta_{n,m'}}+\sum\nolimits_{u \in \mathcal{U}}\sum\nolimits_{n \in \mathcal{N}} a_{n,u} \eta_{n,u}  \frac{\bar{\beta}_{n,m,k}-\bar{\mathbf\gamma}_{n,m,k}}{(L-U-M)\mathbf\gamma_{n,u}}\Big)+1}\label{eq:sinrmkzfa} ~\tag{27}
\end{align} 
\hrulefill
\end{figure*}
\subsection{SINR at the $u$-th Unicast User}\label{sinrmr}
Using~\eqref{eq:rumrb}, the corresponding SINR terms for the $u$-th unicast user can be written as follows:
\begin{align}\label{eq:dsuv} 
&\text{DS}_u^{\varsigma} = \sqrt{p_{\text{dl}}}~ \mathbb{E} \left\{\sum\nolimits_{n \in \mathcal{N}} a_{n,u} \sqrt{\eta_{n,u}} \mathbf{c}_{n,u}^H \boldsymbol{b}_{n,u}^\varsigma \right\}, 
\nonumber\\
&\text{BU}_u^{\varsigma} = \sqrt{p_{\text{dl}}} \sum\nolimits_{n \in \mathcal{N}} a_{n,u} \sqrt{\eta_{n,u}} \mathbf{c}_{n,u}^H \boldsymbol{b}_{n,u}^\varsigma \nonumber \\ 
& \hspace{6em} - \mathbb{E} \left\{ \sqrt{p_{\text{dl}}} \sum\nolimits_{n \in \mathcal{N}} a_{n,u} \sqrt{\eta_{n,u}} \mathbf{c}_{n,u}^H \boldsymbol{b}_{n,u}^\varsigma \right\} ,\nonumber
\\
&\text{UI}_{u,u'}^{\varsigma} = \sqrt{p_{\text{dl}}} \sum\nolimits_{n \in \mathcal{N}} a_{n,u'} \sqrt{\eta_{n,u'}} \mathbf{c}_{n,u}^H \boldsymbol{b}_{n,u'}^\varsigma,\nonumber
\\
&\text{MI}_{u,m}^{\varsigma} =  \sqrt{p_{\text{dl}}}   \sum\nolimits_{n \in \mathcal{N}} \bar{a}_{n,m} \sqrt{\bar{\eta}_{n,m}} \mathbf{c}_{n,u}^H \bar{\boldsymbol{b}}_{n,m}^\varsigma.
\end{align}
\begin{proposition}
The closed-form expressions for the received SINR at the $u$-th unicast user under MR and ZF precoding, $\mathrm{SINR}_u^{\MR}$ and $\mathrm{SINR}_{u}^{\ZF}$, are given by \eqref{eq:sinrumra} and \eqref{eq:sinruzfa}, respectively, shown at the top of the next page.

\end{proposition}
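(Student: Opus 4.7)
The plan is to evaluate each of the four building blocks $\mathrm{DS}_u^{\varsigma}$, $\mathrm{BU}_u^{\varsigma}$, $\mathrm{UI}_{u,u'}^{\varsigma}$, $\mathrm{MI}_{u,m}^{\varsigma}$ in \eqref{eq:dsuv} individually, substitute into \eqref{eq:sinru}, and simplify. Three structural properties will do almost all the work: (i) MMSE--Gaussian orthogonality, so that $\hat{\mathbf{c}}_{n,u}$ and $\tilde{\mathbf{c}}_{n,u}$ are independent (and similarly for the multicast estimates); (ii) pilot orthogonality, which makes $\hat{\mathbf{c}}_{n,u'}$ independent of $\mathbf{c}_{n,u}$ for every $u'\neq u$ and $\hat{\mathbf{t}}_{n,m}$ independent of $\mathbf{c}_{n,u}$ for every $m$; and (iii) independence of channel realizations across different APs $n$. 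Under (iii), the sum over $n$ inside $\mathrm{DS}_u^{\varsigma}$ contributes to the numerator only through $|\mathbb{E}\{\cdot\}|^{2}$, and the variance $\mathbb{E}\{|\mathrm{BU}_u^{\varsigma}|^{2}\}$ cleanly decomposes into a sum of per-AP variances.

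For MR I would first use $\boldsymbol{b}_{n,u}^{\MR}=\hat{\mathbf{c}}_{n,u}$ together with $\mathbb{E}\{\hat{\mathbf{c}}_{n,u}^{H}\hat{\mathbf{c}}_{n,u}\}=L\gamma_{n,u}$ to obtain $\mathrm{DS}_u^{\MR}=\sqrt{p_{\text{dl}}}\,L\sum_{n} a_{n,u}\sqrt{\eta_{n,u}}\,\gamma_{n,u}$, which delivers the numerator of \eqref{eq:sinrumra}. A short moment computation, using $\mathrm{Var}(\|\hat{\mathbf{c}}_{n,u}\|^{2})=L\gamma_{n,u}^{2}$ together with $\mathbb{E}\{|\tilde{\mathbf{c}}_{n,u}^{H}\hat{\mathbf{c}}_{n,u}|^{2}\}=L\gamma_{n,u}(\beta_{n,u}-\gamma_{n,u})$, then yields $\mathrm{Var}(\mathbf{c}_{n,u}^{H}\hat{\mathbf{c}}_{n,u})=L\gamma_{n,u}\beta_{n,u}$, so the BU term contributes $p_{\text{dl}}L\sum_{n} a_{n,u}\eta_{n,u}\gamma_{n,u}\beta_{n,u}$. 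The remaining cross terms are handled by the independent--Gaussian identity $\mathbb{E}\{|\mathbf{x}^{H}\mathbf{y}|^{2}\}=\mathrm{tr}(\mathbf{A}\mathbf{B})$ for $\mathbf{x}\sim\mathcal{CN}(\mathbf{0},\mathbf{A})$, $\mathbf{y}\sim\mathcal{CN}(\mathbf{0},\mathbf{B})$ independent, giving $L\gamma_{n,u'}\beta_{n,u}$ per unicast interferer and $L\zeta_{n,m}\beta_{n,u}$ per multicast group. Merging BU into the $u'$-sum as the $u'=u$ summand reproduces \eqref{eq:sinrumra}.

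For ZF the key identity is $\hat{\mathbf{G}}_{n}^{H}\boldsymbol{b}_{n,u'}^{\ZF}=\mathbf{e}_{u'}$ (and the analogous one with $\bar{\mathbf{e}}_{m}$), which annihilates every product of $\hat{\mathbf{c}}_{n,u}$ against a precoder for a different stream. Writing $\mathbf{c}_{n,u}=\hat{\mathbf{c}}_{n,u}+\tilde{\mathbf{c}}_{n,u}$, the desired signal collapses to $\mathrm{DS}_{u}^{\ZF}=\sqrt{p_{\text{dl}}}\sum_{n} a_{n,u}\sqrt{\eta_{n,u}}$, while every BU/UI/MI summand reduces to $\tilde{\mathbf{c}}_{n,u}^{H}(\cdot)$. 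Since $\tilde{\mathbf{c}}_{n,u}$ is independent of the precoders (which are functions of $\hat{\mathbf{G}}_{n}$ only), each residual squared expectation becomes $(\beta_{n,u}-\gamma_{n,u})\,\mathbb{E}\{\|\boldsymbol{b}\|^{2}\}$, and I would close the proof with the per-stream norms $\mathbb{E}\{\|\boldsymbol{b}_{n,u'}^{\ZF}\|^{2}\}=1/[(L-U-M)\gamma_{n,u'}]$ and $\mathbb{E}\{\|\bar{\boldsymbol{b}}_{n,m}^{\ZF}\|^{2}\}=1/[(L-U-M)\zeta_{n,m}]$ stated before \eqref{eq:xnvp}. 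Merging BU into the $u'$-sum again recovers \eqref{eq:sinruzfa}.

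The hardest step is a clean justification that $\tilde{\mathbf{c}}_{n,u}$ is independent of the entire ZF matrix $\hat{\mathbf{G}}_{n}$: one has to note that every column of $\hat{\mathbf{G}}_{n}$ is a function of a pilot-projected observation that is orthogonal to, and hence in the Gaussian regime independent of, $\tilde{\mathbf{c}}_{n,u}$ -- including the $u$-th column $\hat{\mathbf{c}}_{n,u}$ itself, by the MMSE property. A secondary care point is the co-pilot structure within each multicast group: because $\hat{\mathbf{t}}_{n,m}=\sum_{k\in\mathcal{K}_{m}}\hat{\mathbf{t}}_{n,m,k}$ is built from a single projected observation, the correct aggregated second moment is $\zeta_{n,m}$ from \eqref{eq:zetanm} rather than a sum of the per-user $\bar{\gamma}_{n,m,k}$, and this is precisely what forces $\zeta_{n,m}$ (not $\bar{\gamma}_{n,m,k}$) to appear in the multicast interference terms of both \eqref{eq:sinrumra} and \eqref{eq:sinruzfa}.
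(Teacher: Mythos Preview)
Your proposal is correct and mirrors the paper's own proof in Appendix~\ref{proumr}: both compute $\mathrm{DS}_u^{\varsigma}$, $\mathrm{BU}_u^{\varsigma}$, $\mathrm{UI}_{u,u'}^{\varsigma}$, $\mathrm{MI}_{u,m}^{\varsigma}$ one at a time via the MMSE estimate/error decomposition, invoke the same Gaussian moment identities for MR (including $\mathbb{E}\{\|\hat{\mathbf{c}}_{n,u}\|^4\}=L(L+1)\gamma_{n,u}^2$, equivalent to your $\mathrm{Var}(\|\hat{\mathbf{c}}_{n,u}\|^{2})=L\gamma_{n,u}^{2}$), and for ZF use the nulling property $\hat{\mathbf{G}}_n^H\boldsymbol{b}_{n,u'}^{\ZF}=\mathbf{e}_{u'}$ together with the Wishart-type norm $\mathbb{E}\{\|\boldsymbol{b}_{n,u'}^{\ZF}\|^2\}=1/[(L-U-M)\gamma_{n,u'}]$. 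Your explicit justification that $\tilde{\mathbf{c}}_{n,u}$ is independent of all of $\hat{\mathbf{G}}_n$ and your remark on why $\zeta_{n,m}$ (not $\bar{\gamma}_{n,m,k}$) appears are, if anything, more carefully spelled out than in the paper.
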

\begin{proof}
The proof is provided in Appendix \ref{proumr}. 
\end{proof}

\subsection{SINR at the $k_m$-th Multicast User}\label{sinrzf}
From \eqref{eq:rmkmrb}, the corresponding SINR terms for    the $k_m$-th multicast user can be written as follows:
\setcounter{equation}{24}
\begin{align}\label{eq:dsmkv}
&\text{DS}_{m,k}^{\varsigma} = \sqrt{p_{\text{dl}}} ~\mathbb{E} \left\{\sum\nolimits_{n \in \mathcal{N}} \bar{a}_{n,m} \sqrt{\bar{\eta}_{n,m}} \mathbf{t}_{n,m,k}^H \bar{\boldsymbol{b}}_{n,m}^\varsigma \right\},\nonumber \\
&\text{BU}_{m,k}^{\varsigma} = \sqrt{p_{\text{dl}}} \sum_{n \in \mathcal{N}} \bar{a}_{n,m} \sqrt{\bar{\eta}_{n,m}} \mathbf{t}_{n,m,k}^H \bar{\boldsymbol{b}}_{n,m}^\varsigma \nonumber \\
& \hspace{5em}- \mathbb{E} \left\{ \sqrt{p_{\text{dl}}}\sum\nolimits_{n \in \mathcal{N}} \bar{a}_{n,m} \sqrt{\bar{\eta}_{n,m}} \mathbf{t}_{n,m,k}^H \bar{\boldsymbol{b}}_{n,m}^\varsigma \right\},\nonumber \\
&\text{MI}_{m,k,m'}^{\varsigma} =  \sqrt{p_{\text{dl}}} \sum\nolimits_{n \in \mathcal{N}} \bar{a}_{n,m'} \sqrt{\bar{\eta}_{n,m'}} \mathbf{t}_{n,m,k}^H \bar{\boldsymbol{b}}_{n,m'}^\varsigma ,\nonumber \\
&\text{UI}_{m,k,u}^{\varsigma} =  \sqrt{p_{\text{dl}}}  \sum\nolimits_{n \in \mathcal{N}} a_{n,u} \sqrt{\eta_{n,{u}}} \mathbf{t}_{n,m,k}^H \boldsymbol{b}_{n,u}^\varsigma.
\end{align}
\begin{proposition}
The closed-form expressions for the received SINRs at the $k_m$-th multicast user with MR precoding design, $\mathrm{SINR}_{m,k}^{\MR}$, and with ZF precoding design, $\mathrm{SINR}_{m,k}^{\ZF}$, are given by \eqref{eq:sinrmkmra} and \eqref{eq:sinrmkzfa}, respectively, shown at the top of the next page.    
\end{proposition}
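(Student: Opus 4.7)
The plan is to evaluate, one by one, the four expectations appearing in the use-and-then-forget SINR formula \eqref{eq:sinrmk}, namely $\mathrm{DS}_{m,k}^{\varsigma}$, $\mathrm{BU}_{m,k}^{\varsigma}$, $\mathrm{MI}_{m,k,m'}^{\varsigma}$, and $\mathrm{UI}_{m,k,u}^{\varsigma}$ as listed in \eqref{eq:dsmkv}, for each precoder $\varsigma\in\{\MR,\ZF\}$. The derivation parallels the unicast case in Appendix~\ref{proumr}, except that it must account for the co-pilot structure inside a multicast group: all users in group $m$ share the pilot $\boldsymbol{\varphi}_m$, so the local estimates $\mathbf{\hat{t}}_{n,m,k}$ for $k\in\mathcal{K}_m$ are mutually colinear through the common projected observation $\mathbf{\Check{y}}_{n,p,m}$, whereas estimates and errors across different groups and across unicast users remain mutually independent by pilot orthogonality.

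For the MR branch, I would first decompose $\mathbf{t}_{n,m,k}=\mathbf{\hat{t}}_{n,m,k}+\mathbf{\Tilde{t}}_{n,m,k}$ and invoke the MMSE orthogonality so that $\mathbf{\Tilde{t}}_{n,m,k}$ drops out of $\mathbb{E}\{\mathbf{t}_{n,m,k}^H\mathbf{\hat{t}}_{n,m}\}$. Writing $\mathbf{\hat{t}}_{n,m,k}=(\bar{\beta}_{n,m,k}/\sum_{t}\bar{\beta}_{n,m,t})\,\mathbf{\hat{t}}_{n,m}$ from \eqref{eq:tnmkh}--\eqref{eq:tnmh} collapses the desired-signal expectation to $L\sqrt{\bar{\gamma}_{n,m,k}\zeta_{n,m}}$ after substituting \eqref{eq:gamanmk} and \eqref{eq:zetanm}; squaring the sum over APs then yields the numerator of \eqref{eq:sinrmkmra}. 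The inter-group and unicast-interference terms reduce, via pilot-orthogonality independence, to the standard identities $\mathbb{E}\{|\mathbf{t}_{n,m,k}^H\mathbf{\hat{t}}_{n,m'}|^2\}=L\bar{\beta}_{n,m,k}\zeta_{n,m'}$ and $\mathbb{E}\{|\mathbf{t}_{n,m,k}^H\mathbf{\hat{c}}_{n,u}|^2\}=L\bar{\beta}_{n,m,k}\gamma_{n,u}$. The beamforming-uncertainty second moment then follows from the variance decomposition $\mathbb{E}\{|\mathbf{t}_{n,m,k}^H\mathbf{\hat{t}}_{n,m}|^2\}-|\mathbb{E}\{\mathbf{t}_{n,m,k}^H\mathbf{\hat{t}}_{n,m}\}|^2$, and the cross-AP contributions vanish because the channels at different APs are independent.

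For the ZF branch, I would exploit the defining identity $\mathbf{\hat{G}}_n^H\bar{\boldsymbol{b}}_{n,m'}^{\ZF}=\bar{\mathbf{e}}_{m'}$, which immediately gives $\mathbf{\hat{c}}_{n,u}^H\bar{\boldsymbol{b}}_{n,m'}^{\ZF}=0$ and $\mathbf{\hat{t}}_{n,m}^H\bar{\boldsymbol{b}}_{n,m'}^{\ZF}=\delta_{mm'}$. Combined with the colinearity $\mathbf{\hat{t}}_{n,m,k}\propto\mathbf{\hat{t}}_{n,m}$, this eliminates the estimated-channel contribution to all inter-group and cross-type interference, leaving only the error-driven residues. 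Those reduce to $\mathbb{E}\{|\mathbf{\Tilde{t}}_{n,m,k}^H\bar{\boldsymbol{b}}_{n,m'}^{\ZF}|^2\}$ and $\mathbb{E}\{|\mathbf{\Tilde{t}}_{n,m,k}^H\boldsymbol{b}_{n,u}^{\ZF}|^2\}$, each factoring into the error variance times $\mathbb{E}\{\|\bar{\boldsymbol{b}}_{n,m'}^{\ZF}\|^2\}$ or $\mathbb{E}\{\|\boldsymbol{b}_{n,u}^{\ZF}\|^2\}$, both of which were already recorded below \eqref{eq:xnp}. Assembling the pieces, together with the deterministic numerator coming from $\mathbf{\hat{t}}_{n,m}^H\bar{\boldsymbol{b}}_{n,m}^{\ZF}=1$, produces \eqref{eq:sinrmkzfa}.

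The main obstacle I expect is the bookkeeping inside the co-pilot group: because the $\mathbf{\hat{t}}_{n,m,k}$ for different $k\in\mathcal{K}_m$ are not independent but proportional to the same random vector, the usual ``independence of estimates across users'' shortcut that simplifies the unicast derivation no longer applies, and one must carry the proportionality constant $\bar{\beta}_{n,m,k}/\sum_{t}\bar{\beta}_{n,m,t}$ consistently through every cross-term. Once that colinearity is handled, the remaining manipulations are a routine adaptation of the unicast argument.
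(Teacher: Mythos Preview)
Your proposal is correct and follows essentially the same route as the paper's Appendix~\ref{promkmr}: decompose $\mathbf{t}_{n,m,k}$ into MMSE estimate plus error, exploit the co-pilot colinearity $\mathbf{\hat{t}}_{n,m,k}\propto\mathbf{\hat{t}}_{n,m}$ (which the paper encodes as $\mathbb{E}\{\mathbf{\hat{t}}_{n,m,k}^H\mathbf{\hat{t}}_{n,m}\}=L\sqrt{\zeta_{n,m}\bar{\gamma}_{n,m,k}}$), and for ZF combine the defining relation $\mathbf{\hat{G}}_n^H\bar{\boldsymbol{b}}_{n,m'}^{\ZF}=\bar{\mathbf{e}}_{m'}$ with the Wishart moment $\mathbb{E}\{\|\bar{\boldsymbol{b}}_{n,m}^{\ZF}\|^2\}=1/((L-U-M)\zeta_{n,m})$ to reduce all interference and uncertainty terms to their estimation-error residues. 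The only cosmetic difference is that you spell out the proportionality constant explicitly, whereas the paper absorbs it directly into the stated second-moment identities.
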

\begin{proof}
The proof is provided in Appendix  \ref{promkmr}. 
\end{proof}

\section{Joint AP Selection and Power Allocation Optimization}
Here, we aim to jointly optimize the power allocation coefficients $\boldsymbol{\eta} \triangleq \{\eta_{n,u},\bar{\eta}_{n,m}\}$ and user association $\boldsymbol{a} \triangleq \{a_{n,u},\bar{a}_{n,m}\}$ in order to maximize the weighted SSE of unicast and multicast users, subject to the SE requirements, fronthaul constraints, and the per-AP transmit power constraint in \eqref{eq:xnvp}.\footnote{To explicitly address fairness, the optimization problem can be extended by incorporating fairness-driven criteria. For example, a max-min fairness approach can be adopted
to maximize the minimum SE among multicast groups, ensuring a more uniform performance distribution.} To facilitate our algorithmic design and present a general formulation of the problem,  we introduce the following notation:
 \begin{itemize}
     \item $\rho^{\varsigma} \triangleq
\begin{cases}
  1/L, & \varsigma = \MR,
  \\
  L-U-M, & \varsigma = \ZF.
  \end{cases} $

  \item $\boldsymbol{\theta}^{\varsigma}\triangleq \big[(\boldsymbol{\theta}_1^{\varsigma})^T,\ldots,(\boldsymbol{\theta}_N^{\varsigma})^T\big]^T$, where $\boldsymbol{\theta}_n^{\varsigma}=[\theta_{n,1}^{\varsigma},\dots,\theta_{n,U}^{\varsigma},\bar{\theta}_{n,1}^{\varsigma},\dots,\bar{\theta}_{n,M}^{\varsigma}]^T$, and 
\begin{align}
\theta_{n,u}^{\varsigma} &\triangleq
\begin{cases}
  \sqrt{\eta_{n,u}\gamma_{n,u}}, & \varsigma = \MR,
  \\
  \frac{\sqrt{\eta_{n,u}}}{\sqrt{\gamma_{n,u}}}, & \varsigma = \ZF,
\end{cases} \nonumber\\
\bar{\theta}_{n,m}^{\varsigma} &\triangleq
\begin{cases}
 \sqrt{\bar{\eta}_{n,m}\zeta_{n,m}}, & \varsigma = \MR,
 \\
\frac{\sqrt{\bar{\eta}_{n,m}}}{\sqrt{\zeta_{n,m}}}, & \varsigma = \ZF.
\end{cases}\nonumber
\end{align}

\item $\Lambda_{n,u}^{\varsigma} \triangleq
    \begin{cases}
       L~\sqrt{\mathbf\gamma_{n,u}} & \mbox{${\varsigma} = \MR$},
       \\
        \sqrt{\mathbf\gamma_{n,u}} & \mbox{${\varsigma} = \ZF$}.
    \end{cases}$
\item $\Theta_{n,u}^{\varsigma} \triangleq
    \begin{cases}
       L~\beta_{n,u} & \mbox{${\varsigma} = \MR$},
       \\
        \frac{\beta_{n,u}-\gamma_{n,u}}{(L-U-M)} & \mbox{${\varsigma}= \ZF$}.
    \end{cases}$

    \item $\bar{\Lambda}_{n,m,k}^{\varsigma} \triangleq
    \begin{cases}
       L~\sqrt{\bar{\mathbf\gamma}_{n,m,k}}
       & \mbox{${\varsigma} = \MR$},
       \\
        \sqrt{\zeta_{n,m}} & \mbox{${\varsigma} = \ZF$}.
    \end{cases}$
    
\item $\bar{\Theta}_{n,m,k}^{\varsigma} \triangleq
    \begin{cases}
       L~\bar{\beta}_{n,m,k} & \mbox{${\varsigma} = \MR$},
       \\
        \frac{\bar{\beta}_{n,m,k}-\bar{\gamma}_{n,m,k}}{(L-U-M)} & \mbox{${\varsigma} = \ZF$}.
    \end{cases}$
 \end{itemize}
\setcounter{equation}{27}

 Moreover, by considering~\eqref{eq:anu} and~\eqref{eq:anm},  we enforce
\begin{align} \label{eq:thataa}
   \theta_{n,u}^\varsigma &= 0 , \mathrm{if}~a_{n,u}= 0 ~\forall  ~ n,u,  \nonumber\\ \bar{\theta}_{n,m}^\varsigma &= 0 , \mathrm{if}~\bar{a}_{n,m} = 0 ~\forall  ~ n,m,
\end{align}
to ensure that if AP $n$ does not associate with unicast  user $u$ (multicast user $k_m$), the transmit power ${  p_{\text{dl}} (\theta_{n,u}^\varsigma)^2}/{\gamma_{n,u}}$  towards  unicast  user $u$ (${  p_{\text{dl}} (\bar{\theta}_{n,m}^\varsigma)^2}/{\zeta_{n,m}}$   towards  multicast user $k_m$) is zero.

Now, we can rewrite the received SINR at unicast and multicast users as \eqref{eq:sinruth} and \eqref{eq:sinrmkth}, respectively, shown at the top of the next page. We highlight that the user association  $\boldsymbol{a}$ only affects the SE expressions  via parameter $\boldsymbol{\theta}^{\varsigma}$  and~\eqref{eq:thataa}. Accordingly, the optimization problem is formulated as

\setcounter{equation}{30}
\begin{subequations}\label{eq:optpg}
\begin{align}
&\min\limits_{\boldsymbol{a,\theta} } -\Big(\!w_1 \!\sum_{u \in \mathcal {U}}\mathrm{SE}_{u}^{\varsigma} (\boldsymbol{\theta}^{\varsigma}) + w_2 \!\sum_{m \in \mathcal{M}} \sum_{k \in \mathcal{K}_m} \!\mathrm{SE}_{m,k}^{\varsigma} (\boldsymbol{\theta}^{\varsigma}) \Big), \\
&\text{s.t.:} ~
\nonumber\\
&C_1: \mathrm{SE}_{u}^{\varsigma} (\boldsymbol{\theta}^{\varsigma}) \ge \mathrm{SE}_{QoS} , \mathrm{SE}_{m,k}^{\varsigma} (\boldsymbol{\theta}^{\varsigma}) \ge \bar{\mathrm{SE}}_{QoS}, \forall u, m, \label{eq:cons1g}\\
&C_2: \theta_{n,u}^{\varsigma}  \geq 0 ~,~  \bar{\theta}_{n,m}^{\varsigma} \geq 0, ~~ \forall ~n,u,m, \label{eq:cons2g} \\
&C_3:  \sum\nolimits_{u \in \mathcal {U}} (\theta_{n,u}^{\varsigma})^2   +  \sum\nolimits_{m \in \mathcal{M}} (\bar{\theta}_{n,m}^{\varsigma})^2  \leq \rho^{\varsigma},  ~ \forall ~n, \label{eq:cons3g}
\\
& C_4: \sum_{u \in \mathcal {U}} a_{n,u} \mathrm{SE}_{u}^{\varsigma}(\boldsymbol{\theta}^{\varsigma}) \nonumber\\
&~~~~~~+ \!\!\!\sum_{m \in \mathcal{M}}\bar{a}_{n,m}\sum_{k \in \mathcal{K}_m}  \mathrm{SE}_{m,k}^{\varsigma}(\boldsymbol{\theta}^{\varsigma}) \leq C_{\mathrm{max},n},~~ \forall ~ n, \label{eq:cons4g}
\\
&C_5: \sum\nolimits_{n \in \mathcal{N}} a_{n,u} \ge 1,~ \sum\nolimits_{n \in \mathcal{N}} \bar{a}_{n,m} \ge 1,~~ \forall ~ u,m, \label{eq:cons5g} \\
&C_6: \sum\nolimits_{u \in \mathcal {U}} a_{n,u} +\sum\nolimits_{m \in \mathcal{M}} \bar{a}_{n,m}  \leq K_{\mathrm{max},n},~~ \forall ~ n, \label{eq:cons6g}
\end{align}
\end{subequations}
where $w_1$ and $w_2$,  $w_1+w_2=1$, are the weighting coefficients, while $\mathrm{SE}_{QoS}$ and $\mathrm{\bar{SE}}_{QoS}$ in~\eqref{eq:cons1g} denote the minimum SE requirements for the  unicast users and  multicast users, respectively. The constraint \eqref{eq:cons5g} guarantees that at least one AP serves each unicast user and at least one AP serves each multicast group, while constraint~\eqref{eq:cons6g} guarantees that   the maximum number of unicast users and multicast groups served by each AP is $ K_{\mathrm{max},n}, 1 \leq K_{\mathrm{max},n} \leq U+M$. Additionally, the constraint in \eqref{eq:cons4g} ensures that the fronthaul consumption between the CPU and 
 each AP does not exceed the threshold $C_{\mathrm{max},n}$.  It is important to note that under stringent fronthaul constraints, the achievable SE of both unicast and multicast users can be degraded. To mitigate this, our framework incorporates   minimum SE constraints, ensuring that all users meet basic QoS requirements even under limited fronthaul capacity. Beyond this, practical enhancements, such as quantization and signal compression, as well as fronthaul-aware user scheduling, could be integrated to further improve resilience and performance in constrained deployments~\cite{Yasaman:TWC:2023}.
 \\
\begin{figure*}
\begin{align} 
\mathrm{SINR}_{u}^{\varsigma}(\boldsymbol{\theta}^{\varsigma})&\triangleq\frac{U_u^{\varsigma}(\boldsymbol{\theta}^{\varsigma})}{V_u^{\varsigma}(\boldsymbol{\theta}^{\varsigma})}=\frac{\big(  \sqrt{p_{\text{dl}}}\sum\nolimits_{n \in \mathcal{N}}  \theta_{n,u}^{\varsigma} \Lambda_{n,u}^{\varsigma}\big)^2}{ p_{\text{dl}}~ \sum\nolimits_{u' \in \mathcal {U}} \sum\nolimits_{n \in \mathcal{N}}  (\theta_{n,u'}^{\varsigma})^2 \Theta_{n,u}^{\varsigma} + p_{\text{dl}}  \sum\nolimits_{m \in \mathcal{M}} \sum\nolimits_{n \in \mathcal{N}} (\bar{\theta}_{n,m}^{\varsigma})^2   \Theta_{n,u}^{\varsigma} +1}\label{eq:sinruth} ~\tag{29}\\[0.5em]
\mathrm{SINR}_{m,k}^{\varsigma}(\boldsymbol{\theta}^{\varsigma})&\triangleq\frac{U_{m,k}^{\varsigma}(\boldsymbol{\theta}^{\varsigma})}{V_{m,k}^{\varsigma}(\boldsymbol{\theta}^{\varsigma})}=\frac{\big( \sqrt{p_{\text{dl}}} \sum\nolimits_{n \in \mathcal{N}} \bar{\theta}_{n,m}^{\varsigma}  \bar{\Lambda}_{n,m,k}^{\varsigma} \big)^2}{p_{\text{dl}}  \sum\nolimits_{m' \in \mathcal{M}} \sum\nolimits_{n \in \mathcal{N}} (\bar{\theta}_{n,m'}^{\varsigma})^2 \bar{\Theta}_{n,m,k}^{\varsigma} + p_{\text{dl}} \sum\nolimits_{u \in \mathcal {U}} \sum\nolimits_{n \in \mathcal{N}}  (\theta_{n,u}^{\varsigma})^2 \bar{\Theta}_{n,m,k}^{\varsigma} +1} \label{eq:sinrmkth} ~\tag{30}
\end{align}
\hrulefill
\end{figure*}
\subsection{APG-Based Optimization Approach}
The joint optimization problem~\eqref{eq:optpg} is a non-convex mixed-integer problem that is difficult to solve.
First, to address binary constraints  $\eqref{eq:anu}$ and $\eqref{eq:anm}$,  we note that $x \in$ $\{0,1\}$ is equivalent to $x \in[0,1]\quad \& \quad x-x^2 \leq 0$\cite{Vu:IOT:2022}. Thus we  replace $\eqref{eq:anu}$ and $\eqref{eq:anm}$  with the following AP association constraints:
\begin{subequations}
\begin{align}
S_u(\mathbf{a}) &\triangleq \sum_{u \in \mathcal {U}}   \sum_{n \in \mathcal{N}} (a_{n,u}-a_{n,u}^2) \leq 0, ~~ 0 \leq a_{n,u} \leq 1, ~\forall n, u, \label{eq:conua} \\
\bar{S}_{m}(\mathbf{a}) &\triangleq \!\!\! \sum_{m \in \mathcal{M}}   \sum_{n \in \mathcal{N}} (\bar{a}_{n,m}-\bar{a}_{n,m}^2) \leq 0,~
0 \leq \bar{a}_{n,m} \leq 1, \forall n, m, \label{eq:conma}
\end{align}
\end{subequations}
respectively.
Thus,
\begin{equation}\label{eq:tha}
\begin{split}
(\theta_{n,u}^\varsigma)^2 \leq a_{n,u}, ~~~~~
(\bar{\theta}_{n,m}^\varsigma)^2 \leq \bar{a}_{n,m}.
\end{split}
\end{equation}
Next, we define the new parameter $\boldsymbol{z}\triangleq[\boldsymbol{z}^T_1,\ldots,\boldsymbol{z}^T_N]^T$,   where
$\boldsymbol{z}_n=[z_{n,1},\dots,z_{n,U},\bar{z}_{n,1},\dots,\bar{z}_{n,M}]^T$,
while  $z_{n,u}^2 \triangleq a_{n,u}$ and $\bar{z}_{n,m}^2 \triangleq \bar{a}_{n,m}$ with
\begin{equation}\label{eq:znuznm}
\begin{split}
0\leq z_{n,u} \leq 1 ~~\text{and} ~~0\leq \bar{z}_{n,m}\leq 1.
\end{split}
\end{equation}
Therefore, constraint \eqref{eq:cons5g} can be re-expressed as
\begin{align}\label{eq:znuznmkh}
    \sum\nolimits_{u \in \mathcal{U}} z_{n,u}^2 + \sum\nolimits_{m \in \mathcal{M}} \bar{z}_{n,m}^2 \leq K_{\mathrm{max},n},~~ \forall ~ n.
\end{align}
In addition, constraints \eqref{eq:cons3g}, \eqref{eq:conua}, \eqref{eq:conma},  \eqref{eq:cons5g}, \eqref{eq:tha} and \eqref{eq:cons4g} can be respectively replaced by
\begin{align}
C_{1,u}(\boldsymbol{\theta}^{\varsigma}) &\triangleq \sum\nolimits_{u \in \mathcal {U}} \big[\max (0,\mathrm{SE}_{QoS} -\mathrm{SE}_{u}^{\varsigma}(\boldsymbol{\theta}^{\varsigma}))\big]^2\leq 0,\label{eq:c1uc1m}\\
\bar{C}_{1,m}(\boldsymbol{\theta}^{\varsigma}) \!&\triangleq\!\!\! \sum_{m \in \mathcal{M}} \!\sum_{k \in \mathcal{K}_m} \!\!\big[\max (0,\mathrm{\bar{SE}}_{QoS}\!-\!\mathrm{SE}_{m,k}^{\varsigma} (\boldsymbol{\theta}^{\varsigma}))\!\big]^2\! \leq\! 0,
\\
C_{2,u}(\mathbf{z}) &\triangleq \sum\nolimits_{u \in \mathcal {U}} \sum\nolimits_{n \in \mathcal{N}}(z_{n,u}^2-z_{n,u}^4) \leq 0,\\
\bar{C}_{2,m}(\mathbf{z}) &\triangleq \sum\nolimits_{m \in \mathcal{M}} \sum\nolimits_{n \in \mathcal{N}} (\bar{z}_{n,m}^2-\bar{z}_{n,m}^4) \leq 0,\label{eq:c2uc2m}
\end{align}
\begin{align}
C_{3,u}(\boldsymbol{\theta}^{\varsigma}, \mathbf{z}) &\triangleq \sum\nolimits_{u \in \mathcal {U}} \bigg(\Big[\max \big(0,1-\sum\nolimits_{n \in \mathcal{N}} z_{n,u}^2 \big)\Big]^2\nonumber\\
&\hspace{-2em}+\sum\nolimits_{n \in \mathcal{N}} \Big[\max \big(0, (\theta_{n,u}^{\varsigma})^2 - z_{n,u}^2 \big)\Big]^2\bigg)\leq 0,\\
\bar{C}_{3,m}(\boldsymbol{\theta}^{\varsigma}, \mathbf{z}) &\triangleq \sum\nolimits_{m \in \mathcal{M}} \bigg( \Big[\max \big(0,1-\sum_{n \in \mathcal{N}} \bar{z}_{n,m}^2 \big) \Big]^2\nonumber\\
&\hspace{-2em}+\sum\nolimits_{n \in \mathcal{N}} \Big[\max \big(0, (\bar{\theta}_{n,m}^{\varsigma})^2 - \bar{z}_{n,m}^2 \big)\Big]^2\bigg)\leq 0,
\end{align}
\begin{align}
C_{4}(\boldsymbol{\theta}^{\varsigma}, \mathbf{z}) &\triangleq \sum\nolimits_{n \in \mathcal {N}}  \bigg[\max \Big(0,\sum_{u \in \mathcal {U}} z_{n,u}^2 \mathrm{SE}_{u}^{\varsigma}(\boldsymbol{\theta}^{\varsigma}) \nonumber \\
& \hspace{-4em}+\! \sum\nolimits_{m \in \mathcal{M}}\!\!\bar{z}_{n,m}^2\sum\nolimits_{k \in \mathcal{K}_m}  \mathrm{SE}_{m,k}^{\varsigma}(\boldsymbol{\theta}^{\varsigma})\!- \!C_{\mathrm{max},n}\Big)\bigg]^2 \!\!\leq 0. \label{eq:c4um}
\end{align} 

Now, we define
\begin{align}\label{eq:fv}
&g(\boldsymbol{\vartheta}^\varsigma) \triangleq  -\Big(w_1 \sum_{u \in \mathcal {U}} \mathrm{SE}_{u}^{\varsigma}(\boldsymbol{\theta}^\varsigma) + w_2 \sum_{m \in \mathcal{M}} \sum_{k \in \mathcal{K}_m}\mathrm{SE}_{m,k}^{\varsigma}(\boldsymbol{\theta}^\varsigma) \Big) \nonumber \\ 
&+ X \Big[\mu_1 \big(C_{1,u}(\boldsymbol{\theta}^\varsigma)+\bar{C}_{1,m}(\boldsymbol{\theta}^\varsigma)\big)+\mu_2 \big(C_{2,u}(\mathbf{z})+\bar{C}_{2,m}(\mathbf{z})\big) \nonumber \\
&~~~+\mu_3 \big(C_{3,u}(\boldsymbol{\theta}^\varsigma, \mathbf{z})+\bar{C}_{3,m}(\boldsymbol{\theta}^\varsigma, \mathbf{z}) \big) + \mu_4 C_{4}(\boldsymbol{\theta}^\varsigma, \mathbf{z}) \Big],
\end{align}
where $\mu_1, \mu_2, \mu_3, ~\text{and}~\mu_4 $ are positive weights,  $X$ is the Lagrangian multiplier, and $\boldsymbol{\vartheta}^\varsigma \triangleq\big[{(\boldsymbol{\theta}^\varsigma)}^T, \mathbf{z}^T\big]^T$. 
Thus, the optimization problem~\eqref{eq:optpg} can be expressed equivalently as
\begin{equation}\label{eq:apgopt}
\min\limits_{\boldsymbol{\vartheta}^\varsigma \in \widehat{\mathcal{C}}} g(\boldsymbol{\vartheta}^\varsigma),
\end{equation}
where $\widehat{\mathcal{C}} \triangleq \{\eqref{eq:cons2g}, \eqref{eq:cons3g}, \eqref{eq:znuznm},\eqref{eq:znuznmkh}\}$ is the convex feasible set. 
Here, we leverage the APG approach to tackle  joint optimization problem  \eqref{eq:apgopt}. Although the APG approach is suboptimal, it offers significantly lower complexity compared to common SCA algorithms, especially beneficial for handling large-scale CF-mMIMO networks~\cite{Farooq:TCOM:2021,Mai:TWC:2022}.
Specifically, our proposed method to solve problem \eqref{eq:apgopt} is given in \textbf{Algorithm~\ref{algo}}. 
The primary tasks in executing \textbf{Algorithm~\ref{algo}} include computing the gradient of the objective function and performing projections, as outlined below.
\subsubsection{Gradient  of $g(\boldsymbol{\vartheta}^\varsigma)$}
The gradients  $\frac{\partial}{\partial \theta_{n,u}^\varsigma} g(\boldsymbol{\vartheta}^\varsigma)$ and $\frac{\partial}{\partial z_{n,u}} g(\boldsymbol{\vartheta}^\varsigma)$ are given by
\begin{align}
 \frac{\partial}{\partial \theta_{n,u}^\varsigma} g(\boldsymbol{\vartheta}^\varsigma)\! &=\! -w_1 \!\sum\nolimits_{i \in \mathcal {U}} \frac{\partial}{\partial \theta_{n,u}^\varsigma} \mathrm{SE}_{i}^{\varsigma}(\boldsymbol{\theta}^\varsigma) + X \frac{\partial}{\partial \theta_{n,u}^\varsigma} C_{u} (\boldsymbol{\vartheta}^\varsigma),
\\
\frac{\partial}{\partial z_{n,u}} g(\boldsymbol{\vartheta}^\varsigma)\!&=\! -w_1\!\sum\nolimits_{i \in \mathcal {U}} \frac{\partial}{\partial z_{n,u}}\mathrm{SE}_{i}^{\varsigma}(\boldsymbol{\theta}^\varsigma) + X \frac{\partial}{\partial z_{n,u}} C_{u} (\boldsymbol{\vartheta}^\varsigma),
\end{align}
respectively, with $C_u(\boldsymbol{\vartheta}^\varsigma)= \mu_1 C_{1,u}(\boldsymbol{\theta}^\varsigma)+\mu_2 C_{2,u}(\mathbf{z})+\mu_3 C_{3,u}(\boldsymbol{\theta}^\varsigma, \mathbf{z}) + \mu_4 C_{4}(\boldsymbol{\theta}^\varsigma, \mathbf{z})$.
Moreover, $\frac{\partial}{\partial \theta_{n,u}^\varsigma} \mathrm{SE}_{i}^{\varsigma}(\boldsymbol{\theta}^\varsigma)$ is given by
\begin{align}\label{eq:gradthanu}
\frac{\partial}{\partial \theta_{n,u}^\varsigma} &\mathrm{SE}_{i}^{\varsigma}\!(\boldsymbol{\theta}^\varsigma)=\nonumber\\
&\hspace{-1em}\!\!\frac{T-\tau}{T\ln 2}\!\bigg[\!\frac{\frac{\partial}{\partial \theta_{n,u}^\varsigma}(U_i^\varsigma(\boldsymbol{\theta}^\varsigma)+V_i^\varsigma(\boldsymbol{\theta}^\varsigma))}{(U_i^\varsigma(\boldsymbol{\theta}^\varsigma)+V_i^\varsigma(\boldsymbol{\theta}^\varsigma))}-\frac{\frac{\partial}{\partial \theta_{n,u}^\varsigma} V_i^\varsigma(\boldsymbol{\theta}^\varsigma)}{V_i^\varsigma(\boldsymbol{\theta}^\varsigma)}\!\bigg],
\end{align}
with
\begin{align}
&\frac{\partial U_i^\varsigma(\boldsymbol{\theta}^\varsigma)}{\partial \theta_{n,u}^\varsigma} = 
\begin{cases}
2( \sqrt{p_{\text{dl}}}\sum_{n \in \mathcal{N}} \theta_{n,u}^\varsigma \Lambda_{n,u}^\varsigma)(\sqrt{p_{\text{dl}}}\Lambda_{n,u}^\varsigma), & i=u, \\
0, 
& i \neq u, 
\end{cases}\nonumber \\
&\frac{\partial}{\partial \theta_{n,u}^\varsigma} V_i^\varsigma(\boldsymbol{\theta}^\varsigma)= 
\begin{cases}
2~ p_{\text{dl}} \theta_{n,u}^\varsigma \Theta_{n,u}^\varsigma, & i = u , \\
2~ p_{\text{dl}} \theta_{n,u}^\varsigma \Theta_{n,i}^\varsigma, & i \neq u.
\end{cases} 
\end{align}
Moreover,
$ -\sum_{i \in \mathcal {U}} \frac{\partial}{\partial z_{n,u}} \mathrm{SE}_{i}^{\varsigma}(\boldsymbol{\theta}^\varsigma) = 0, ~ \forall~ n, u,i$. In addition,
\begin{align}
\frac{\partial}{\partial \theta_{n,u}} C_{u} (\boldsymbol{\vartheta}^\varsigma) =& - \mu_1 \sum\nolimits_{i \in \mathcal {U}} 2  \max \big(0,\mathrm{SE}_{QoS}-\mathrm{SE}_{i}^{\varsigma}(\boldsymbol{\theta}^\varsigma)\big) \nonumber \\
&\hspace{-4em}\times \frac{\partial}{\partial \theta_{n,u}^\varsigma} \mathrm{SE}_{i}^{\varsigma}(\boldsymbol{\theta}^\varsigma)+ 4 \mu_3  \max \big(0, (\theta_{n,u}^\varsigma)^2 - z_{n,u}^2 \big)\theta_{n,u}^\varsigma \nonumber \\
&\hspace{-4em} +2z_{n,i}^2\mu_4 \bigg[\max \Big(0,\sum\nolimits_{i \in \mathcal {U}} z_{n,i}^2 \mathrm{SE}_{i}^{\varsigma}(\boldsymbol{\theta}^\varsigma) \nonumber \\
& \hspace{-6em} + \!\!\!\!\sum_{m \in \mathcal{M}}\!\!\!\bar{z}_{n,m}^2 \!\!\! \sum_{k \in \mathcal{K}_m} \!\!\! \mathrm{SE}_{m,k}^{\varsigma}(\boldsymbol{\theta}^\varsigma) \! - C_{\mathrm{max},n}\Big)\!\bigg]\!\!  \frac{\partial}{\partial \theta_{n,u}} \mathrm{SE}_{i}^{\varsigma}(\boldsymbol{\theta}^\varsigma),
\end{align}
and
\begin{align}
& \frac{\partial}{\partial z_{n,u}} C_{u} (\boldsymbol{\vartheta}^\varsigma) =\mu_2 \big(2z_{n,u}-4z_{n,u}^3 \big) - 4 \mu_3 \big(0, (\theta_{n,u}^\varsigma)^2\nonumber\\
&\hspace{1em}- z_{n,u}^2 \big)z_{n,u}- 4 \mu_3 \max \Big(0,1-\sum\nolimits_{n \in \mathcal{N}} z_{n,u}^2 \Big)z_{n,u} \nonumber \\
&+ 4 \mu_4 \bigg[\max \Big(0,\sum\nolimits_{i \in \mathcal {U}} z_{n,i}^2 \mathrm{SE}_{i}^{\varsigma}(\boldsymbol{\theta}^\varsigma) \nonumber \\
&+ \!\!\sum_{m \in \mathcal{M}}\!\!\bar{z}_{n,m}^2\sum_{k \in \mathcal{K}_m} \!\!\mathrm{SE}_{m,k}^{\varsigma}(\boldsymbol{\theta}^\varsigma) - C_{\mathrm{max},n}\Big)\bigg]z_{n,u} \mathrm{SE}_{u}^{\varsigma}(\boldsymbol{\theta}^\varsigma),
\end{align}
while $\frac{\partial}{\partial \bar{\theta}_{n,m}^\varsigma} g(\boldsymbol{\vartheta}^\varsigma)$ and $\frac{\partial}{\partial \bar{z}_{n,m}} g(\boldsymbol{\vartheta}^\varsigma)$ are given by
\begin{align}
 &\frac{\partial}{\partial \bar{\theta}_{n,m}^\varsigma} g(\boldsymbol{\vartheta}^\varsigma)=\nonumber\\
 & -w_2\sum_{i \in \mathcal{M}} \sum_{k \in \mathcal{K}_m} \frac{\partial}{\partial \bar{\theta}_{n,m}^\varsigma} \mathrm{SE}_{i,k}^{\varsigma}(\boldsymbol{\theta}^\varsigma) + X \frac{\partial}{\partial \bar{\theta}_{n,m}^\varsigma} \bar{C}_{m} (\boldsymbol{\vartheta}^\varsigma),
 \end{align}
 \begin{align}
&\frac{\partial}{\partial \bar{z}_{n,m}} g(\boldsymbol{\vartheta}^\varsigma)=\nonumber\\
&-w_2\sum_{i \in \mathcal{M}} \sum_{k \in \mathcal{K}_m} \frac{\partial}{\partial \bar{z}_{n,m}} \mathrm{SE}_{i,k}^{\varsigma} (\boldsymbol{\theta}^\varsigma) +X \frac{\partial}{\partial \bar{z}_{n,m}} \bar{C}_{m} (\boldsymbol{\vartheta}^\varsigma),
\end{align}
respectively, where $\bar{C}_m(\boldsymbol{\vartheta}^\varsigma)= \mu_1 \bar{C}_{1,m}(\boldsymbol{\theta}^\varsigma)+\mu_2 \bar{C}_{2,m}(\mathbf{z})+\mu_3 \bar{C}_{3,m}(\boldsymbol{\theta}^\varsigma, \mathbf{z}) + \mu_4 C_{4}(\boldsymbol{\theta}^\varsigma, \mathbf{z})$. 
On the other hand $\frac{\partial}{\partial \bar{\theta}_{n,m}^\varsigma}\mathrm{SE}_{i,k}^{\varsigma}(\boldsymbol{\theta}^\varsigma)$ is calculated as
\begin{align}\label{eq:gradthanm}
\frac{\partial}{\partial \bar{\theta}_{n,m}^\varsigma} & \mathrm{SE}_{i,k}^{\varsigma}(\boldsymbol{\theta}^\varsigma)=\nonumber\\
&\hspace{-2.5em}\frac{T-\tau}{T\ln 2}\!\bigg[\!\frac{\frac{\partial}{\partial \bar{\theta}_{n,m}^\varsigma}(U_{i,k}^\varsigma(\boldsymbol{\theta}^\varsigma)+V_{i,k}^\varsigma(\boldsymbol{\theta}^\varsigma))}{(U_{i,k}^\varsigma(\boldsymbol{\theta}^\varsigma)+V_{i,k}^\varsigma(\boldsymbol{\theta}^\varsigma))}\!-\!\frac{\frac{\partial}{\partial \bar{\theta}_{n,m}^\varsigma}V_{i,k}^\varsigma(\boldsymbol{\theta}^\varsigma)}{V_{i,k}^\varsigma(\boldsymbol{\theta}^\varsigma)}\bigg],
\end{align}
with
\begin{align}    
&\frac{\partial}{\partial \bar{\theta}_{n,m}^\varsigma} U_{i,k}^\varsigma(\boldsymbol{\theta}^\varsigma)\!=\! 
\begin{cases}
\!2p_{\text{dl}}(\sum\limits_{n \in \mathcal{N}} \bar{\theta}_{n,m}^\varsigma \bar{\Lambda}_{n,m,k}^\varsigma)(\bar{\Lambda}_{n,m,k}^\varsigma), & i\!=\!m, \\
\!0, & i \!\neq \!m, 
\end{cases} \nonumber\\
&\frac{\partial}{\partial \bar{\theta}_{n,m}^\varsigma} V_{i,k}^\varsigma(\boldsymbol{\theta}^\varsigma)= 
\begin{cases}
2~ p_{\text{dl}} \bar{\theta}_{n,m}^\varsigma \bar{\Theta}_{n,m,k}^\varsigma, & i = m , \\
2~ p_{\text{dl}} \bar{\theta}_{n,m}^\varsigma \bar{\Theta}_{n,i,k}^\varsigma, & i \neq m,
\end{cases}  
\end{align}
while
$ -\sum_{i \in \mathcal{M}} \sum_{k \in \mathcal{K}_m} \frac{\partial}{\partial \bar{z}_{n,m}} \mathrm{SE}_{i,k}^\varsigma(\boldsymbol{\theta}^\varsigma) = 0, ~ \forall~ n,i,k$. In addition,
\begin{align}
&\frac{\partial}{\partial \bar{\theta}_{n,m}^\varsigma} \bar{C}_{m} (\boldsymbol{\vartheta}^\varsigma)
\!=\! - \mu_1 \!\!\!\sum_{m \in \mathcal{M}} \sum_{k \in \mathcal{K}_m}\!\! 2  \max \big(0,\mathrm{\bar{SE}}_{QoS} \!-\!\mathrm{SE}_{i}^{\varsigma}(\boldsymbol{\theta}^\varsigma)\big)\nonumber \\
&\times\frac{\partial}{\partial \bar{\theta}_{n,m}^\varsigma} \mathrm{SE}_{i}^{\varsigma}(\boldsymbol{\theta}^\varsigma)+ 4 \mu_3 \max \big(0, (\bar{\theta}_{n,m}^\varsigma)^2 - \bar{z}_{n,m}^2 \big)\bar{\theta}_{n,m}^\varsigma\nonumber \\
&+ 2 \mu_4 \bigg[\max \Big(0,\sum_{u \in \mathcal {U}} z_{n,u}^2 \mathrm{SE}_{u}^{\varsigma}(\boldsymbol{\theta}^\varsigma)\nonumber \\
&+\!\! \sum_{i \in \mathcal{M}}\!\!\bar{z}_{n,i}^2\!\!\sum_{k \in \mathcal{K}_i}\!  \mathrm{SE}_{i,k}^{\varsigma}(\boldsymbol{\theta}^\varsigma) \!- \!C_{\mathrm{max},n}\Big)\bigg] \bar{z}_{n,i}^2 \!\!\sum_{k \in \mathcal{K}_i}\!\! \frac{\partial}{\partial \bar{\theta}_{n,m}^\varsigma} \mathrm{SE}_{i,k}^{\varsigma}(\boldsymbol{\theta}^\varsigma),
\end{align}
and
\begin{align}
\frac{\partial}{\partial \bar{z}_{n,m}} &\bar{C}_{m} (\boldsymbol{\vartheta}^\varsigma) \!\!= \nonumber\\
&\!\! \mu_2 \big(2\bar{z}_{n,m}-4\bar{z}_{n,m}^3\big)\! - \! 4 \mu_3 \big(0, (\bar{\theta}_{n,m}^\varsigma)^2 \!\! - \! \bar{z}_{n,m}^2 \big)\bar{z}_{n,m}
\nonumber\\
&- 4 \mu_3 \max \Big(0,1-\sum_{n \in \mathcal{N}} \bar{z}_{n,m}^2 \Big)\bar{z}_{n,m} \nonumber \\
&\!\!\!\!+  4 \mu_4 \bigg[\max \Big(0,\!\!\sum_{u \in \mathcal {U}}\!\! z_{n,u}^2 \mathrm{SE}_{u}^{\varsigma}(\boldsymbol{\theta}^\varsigma) +\!\!\! \sum_{i \in \mathcal{M}}\!\!\bar{z}_{n,i}^2\!\!\sum_{k \in \mathcal{K}_i} \!\!\mathrm{SE}_{i,k}^{\varsigma}(\boldsymbol{\theta}^\varsigma) \nonumber \\
&- C_{\mathrm{max},n}\Big)\bigg] \bar{z}_{n,m} \sum_{k \in \mathcal{K}_m} \mathrm{SE}_{m,k}^{\varsigma}(\boldsymbol{\theta}^\varsigma).
\end{align}
\begin{algorithm}[!t]
\caption{Solving \eqref{eq:apgopt} Using APG Approach }
\begin{algorithmic}[1]\label{algo}
\STATE \textbf{Initialize}: $X >0$, ${\boldsymbol{\vartheta}^\varsigma}^{(0)}$,  $\alpha_{\overline{\boldsymbol{\vartheta}^\varsigma}}>0, \alpha_{\boldsymbol{\vartheta}^\varsigma}>0$. Set $\tilde{\boldsymbol{\vartheta}^\varsigma}^{(1)}={\boldsymbol{\vartheta}^\varsigma}^{(1)}={\boldsymbol{\vartheta}^\varsigma}^{(0)}, \zeta \in[0,1), b^{(1)}=1, {c^\varsigma}^{(1)} = g({\boldsymbol{\vartheta}^\varsigma}^{(1)})$,   $o=1, q^{(0)}=0, q^{(1)}=1$. Choose $\overline{\boldsymbol{\vartheta}}^{(0)}$ from feasible set $ \widehat{\mathcal{C}}$.
\STATE \textbf{Repeat}  
\STATE \textbf{while}  $\big|\frac{g\left({\boldsymbol{\vartheta}^\varsigma}^{(o)}\right)-g\left({\boldsymbol{\vartheta}^\varsigma}^{(o-10)}\right)}{g\left({\boldsymbol{\vartheta}^\varsigma}^{(o)}\right)}\big|\!\leq  \epsilon$ \text{or}  $\big|\frac{f\left({\boldsymbol{\theta}^\varsigma}^{(o)}\right)-f\left({\boldsymbol{\theta}^\varsigma}^{(o-1)}\right)}{f\left({\boldsymbol{\theta}^\varsigma}^{(o)}\right)}\big|\!\!\leq\!  \epsilon$ \textbf{do}
\STATE \text { update } $\overline{{\boldsymbol{\vartheta}^\varsigma}}^{(o)}$ as~\eqref{eq:Vo}
\STATE Set $\tilde{\boldsymbol{\vartheta}^\varsigma}^{(o+1)}=\mathcal{P}_{\hat{\mathcal{C}}} (\overline{\boldsymbol{\vartheta}^\varsigma}^{(o)}-\alpha_{\boldsymbol{\vartheta}^\varsigma} \nabla g (\overline{\boldsymbol{\vartheta}^\varsigma}^{(o)}))$,
\STATE \textbf{ if } $g(\tilde{\boldsymbol{\vartheta}^\varsigma}^{(o+1)}) \leq {c^\varsigma}^{(o)}-\zeta \|\tilde{\boldsymbol{\vartheta}^\varsigma}^{(o+1)}-\overline{\boldsymbol{\vartheta}^\varsigma}^{(o)} \|^2$ \textbf{then}
\STATE ${\boldsymbol{\vartheta}^\varsigma}^{(o+1)}=\tilde{\boldsymbol{\vartheta}^\varsigma}^{(o+1)}$
\STATE \textbf{else}
\STATE update $\hat{\boldsymbol{\vartheta}^\varsigma}^{(o+1)}$ \text { using \eqref{eq:vnhat+1} } and then ${\boldsymbol{\vartheta}^\varsigma}^{(o+1)}$ \text { using \eqref{eq:vn+1} }
\STATE \textbf{end if}
\STATE update $q^{(o+1)}$ using~\eqref{eq:q_n}.
\STATE update $b^{(o+1)}$ \text { using \eqref{eq:bn+1}  and } ${c^\varsigma}^{(o+1)}$ \text { using \eqref{eq:cn+1} }
\STATE  update $o=o+1$
\STATE \textbf{end while} 
\STATE \textbf{until} {convergence}.
\end{algorithmic}
\end{algorithm}
\subsubsection{Projection onto $\widehat{\mathcal{C}}$}
The projection of the   given  $\mathbf{r}  \in \mathbb{R}^{2N(U+M) \times 1}$  onto the feasible set $\widehat{\mathcal{C}}$ in  \textbf{Step 5} of   \textbf{Algorithm~\ref{algo}}
 can be done by solving the  problem
\begin{align} \label{eq:proj}
\mathcal{P}_{\widehat{\mathcal{C}}}(\mathbf{r}):&\min _{\boldsymbol{\vartheta}^\varsigma \in \mathbb{R}^{2N(U+M) \times 1}}\|\boldsymbol{\vartheta}^\varsigma-\mathbf{r}\|^2 \\
&\text { s.t. } ~~ \eqref{eq:cons2g},\eqref{eq:cons3g},\eqref{eq:znuznm}, \eqref{eq:znuznmkh},
\end{align}
with $\mathbf{r}=\left[\mathbf{r}_1^T, \mathbf{r}_2^T\right]^T$, $\mathbf{r}_1 \triangleq\left[\mathbf{r}_{1,1}^T, \ldots, \mathbf{r}_{1, N}^T\right]^T$ and $\mathbf{r}_{1, n} \triangleq\left[r_{1, n1}, \ldots, r_{1, nU} ,\bar{r}_{1, n1},\ldots,\bar{r}_{1, nM} \right]^T$, while $\mathbf{r}_{2, n} \triangleq\left[r_{2, n1} , \ldots, r_{2, nU} ,\bar{r}_{2, n1},\ldots,\bar{r}_{2, nM} \right]^T$. Problem \eqref{eq:proj} can be split into two separate subproblems for calculating $\boldsymbol{\theta}_n^\varsigma $ and $\mathbf{z}_n$. Following a similar approach as in \cite{Farooq:TCOM:2021}  we can find the following closed-form expression 
\begin{align} \label{eq:prothn}
\boldsymbol{\theta}_n^\varsigma=\frac{\sqrt{\rho}}{\max \left(\sqrt{\rho},\left\|\left[\mathbf{r}_{1, n}\right]_{0}^+\right\|\right)}\left[\mathbf{r}_{1, n}\right]_{0}^+,
\end{align}
where $ \big[\Pi\big]_{0}^+\!\! \triangleq\!\!\big[\max \left(0, \pi_1 \right)\!,\! \ldots\!,\! \max \left(0, \pi_{U} \right)\!,\max \!\left(0, \bar{\pi}_1\right)\!,\dots,\\\max \left(0, \bar{\pi}_{M}\right) \big]^T, \forall~ \Pi \in \mathbb{R}^{(U+M) \times 1}$
and
 \begin{align}
 \label{eq:prozn}
\mathbf{z}_n=\bigg[\frac{\sqrt{K_{\mathrm{max}}}}{\max \left(\sqrt{K_{\mathrm{max}}},\left\|\left[\mathbf{r}_{2, n}\right]_{0}^+\right\|\right)}\left[\mathbf{r}_{2, n}\right]_{0}^+\bigg]_{1-},
\end{align}
where 
$ \big[\Pi\big]_{1-}\! \triangleq\! \big[\min\! \left(1, \pi_1 \right)\!, \ldots, \min \!\left(1, \pi_{U} \right)\!,
\min\! \left(1, \bar{\pi}_1\right)\!,\dots,\\\min \left(1, \bar{\pi}_{M}\right) \big]^T, \forall~ \Pi \in \mathbb{R}^{(U+M) \times 1}.$
Given that the feasible set $\widehat{\mathcal{C}}$ is bounded, it follows that $\nabla g(\boldsymbol{\vartheta}^\varsigma)$ is Lipschitz continuous with a known constant $J$. This implies that for all $\mathbf{v}, \mathbf{w} \in \widehat{\mathcal{C}}$, the gradient satisfies $\|\nabla g(\mathbf{v})-\nabla g(\mathbf{w})\| \leq J\|\mathbf{v}-\mathbf{w}\|$.

In \textbf{Algorithm~\ref{algo}}, beginning with a random initial point $\overline{\boldsymbol{\vartheta}^\varsigma}^{(0 )}$, we update $\overline{\boldsymbol{\vartheta}^\varsigma}^{(o )}$ at each iteration as follows:
\begin{align} \label{eq:Vo}
&\overline{\boldsymbol{\vartheta}^\varsigma}^{(o)} =\nonumber\\
& {\boldsymbol{\vartheta}^\varsigma}^{(o)}\! +\! \frac{q^{(o - 1)}}{q^{(o)}}\left(\!\tilde{\boldsymbol{\vartheta}^\varsigma}^{(o)}\! -\! {\boldsymbol{\vartheta}^\varsigma}^{(o)}\!\right)\! + \! \frac{q^{(o - 1)}\! -\! 1}{q^{(o)}}\left(\!{\boldsymbol{\vartheta}^\varsigma}^{(o)}\! -\! {\boldsymbol{\vartheta}^\varsigma}^{(o - 1)}\!\right),
\end{align}
where 
\begin{align} \label{eq:q_n}
 q^{(o+1)}=\frac{1+\sqrt{4\left(q^{(o)}\right)^2+1}}{2}.
 \end{align}
We then proceed along the gradient of the objective function with a specified step size $\alpha_{\overline{\boldsymbol{\vartheta}^\varsigma}}$. The resulting point $\left(\overline{\boldsymbol{\vartheta}^\varsigma}-\alpha_{\overline{\boldsymbol{\vartheta}^\varsigma}} \nabla g(\overline{\boldsymbol{\vartheta}^\varsigma})\right)$ is subsequently projected onto the feasible set $\widehat{\mathcal{C}}$, yielding
\begin{align} 
\tilde{\boldsymbol{\vartheta}^\varsigma}^{(o+1)}=\mathcal{P}_{\hat{\mathcal{C}}}\left(\overline{\boldsymbol{\vartheta}^\varsigma}^{(o)}-\alpha_{\overline{\boldsymbol{\vartheta}^\varsigma}} \nabla g\left(\overline{\boldsymbol{\vartheta}^\varsigma}^{(o)}\right)\right).
\end{align}
It is important to note that $g(\boldsymbol{\vartheta}^\varsigma)$ is not convex, so $g\big(\tilde{\boldsymbol{\vartheta}^\varsigma}^{(o+1)}\big)$ may not necessarily improve the objective sequence. Consequently, ${\boldsymbol{\vartheta}^\varsigma}^{(o+1)} = \tilde{\boldsymbol{\vartheta}^\varsigma}^{(o+1)}$ is accepted only if the objective value $g\big(\tilde{\boldsymbol{\vartheta}^\varsigma}^{(o+1)}\big)$ is below $c^{(o)}$, which acts as a relaxation of $g\big({\boldsymbol{\vartheta}^\varsigma}^{(o)}\big)$ while remaining relatively close to it. Moreover,
$c^{(o)}$ can be computed as 
\begin{align} 
& {c^\varsigma}^{(o)}= \frac{\sum\nolimits_{o=1}^\kappa \zeta^{(\kappa-o)} g\big({\boldsymbol{\vartheta}^\varsigma}^{(o)}\big)}{\sum\nolimits_{o=1}^\kappa \zeta^{(\kappa-o)}},
\end{align}
where $\zeta \in[0,1)$ is used to control the non-monotonicity degree. 
After each iteration, ${c^\varsigma}^{(o)}$ can be updated iteratively as follows
\begin{align}\label{eq:cn+1}
&{c^\varsigma}^{(o+1)}=\frac{\zeta b^{(o)} {c^\varsigma}^{(o)}+g\big({\boldsymbol{\vartheta}^\varsigma}^{(o)}\big)}{b^{(o+1)}},
\end{align}
where ${c^\varsigma}^{(1)}\!\!=\!\!g\big({\boldsymbol{\vartheta}^\varsigma}^{(1)}\big)$ and $b^{(1)}\!\!=\!\!1$, and ${b}^{(o+1)}$ can be  obtained as
\begin{align}\label{eq:bn+1}
&{b}^{(o+1)}=\zeta b^{(o)}+1. 
\end{align}
When the condition $g \big(\tilde{\boldsymbol{\vartheta}^\varsigma}^{(o+1)}\big) \leq {c^\varsigma}^{(o)}-\zeta\Vert\tilde{\boldsymbol{\vartheta}^\varsigma}^{(o+1)}-\overline{\boldsymbol{\vartheta}^\varsigma}^{(o)}\Vert^2$ is not satisfied, extra correction steps are employed to avoid this situation. Specifically, another point is calculated with a dedicated step size $\alpha_{\boldsymbol{\vartheta}}$ as 
\begin{align} \label{eq:vnhat+1}
\hat{\boldsymbol{\vartheta}^\varsigma}^{(o+1)}=\mathcal{P}_{\hat{\mathcal{C}}}\left({\boldsymbol{\vartheta}^\varsigma}^{(o)}-\alpha_{\boldsymbol{\vartheta}^\varsigma} \nabla g\big({\boldsymbol{\vartheta}^\varsigma}^{(o)}\big)\right).
\end{align}
Then, ${\boldsymbol{\vartheta}^\varsigma}^{(o+1)}$ is updated by comparing the objective values at $\tilde{\boldsymbol{\vartheta}^\varsigma}^{(o+1)}$ and $\hat{\boldsymbol{\vartheta}^\varsigma}^{(o+1)}$ as
\begin{align}\label{eq:vn+1}
\!{\boldsymbol{\vartheta}^\varsigma}^{(o+1)} \triangleq\!\left\{\begin{array}{ll}
\tilde{\boldsymbol{\vartheta}^\varsigma}^{(o+1)}, &\!\! \text { if } g\big(\tilde{\boldsymbol{\vartheta}^\varsigma}^{(o+1)}\big) \leq g\big(\hat{\boldsymbol{\vartheta}^\varsigma}^{(o+1)}\big),  \\
\hat{\boldsymbol{\vartheta}^\varsigma}^{(o+1)}, & \text {otherwise}.
\end{array}\right.
\end{align}
Finally, we emphasize that our proposed APG-based optimization approach operates on the large-scale fading timescale, which varies slowly over time~\cite{zahra:MSP:2025}.
\begin{algorithm}[!t]
\caption{Solving \eqref{eq:scaoptf} Using SCA Approach}
\begin{algorithmic}[1]\label{algo2}
\STATE \textbf{Initialize}: iteration index $i=0$,  $\lambda > 1$, $\tilde{\mathbf{v}}^{\varsigma^{(0)}} \in \tilde{\mathcal{V}}$.
\STATE \textbf{repeat}
\STATE \text{ update } $i=i+1$
\STATE \text{ solve} \eqref{eq:scaoptf} \text {to obtain the optimal} $\tilde{\mathbf{v}}^{\varsigma *}$
\STATE \text{ update } $\tilde{\mathbf{v}}^{\varsigma^{(i)}}=\tilde{\mathbf{v}}^{\varsigma *}$ 
\STATE \textbf{until} {convergence}.
\end{algorithmic}
\end{algorithm}
\subsection{SCA-Based Optimization Approach}
The APG method is characterized by low computational complexity. However, its effectiveness depends on the careful selection of the step sizes, $\alpha_{\overline{\boldsymbol{\vartheta}^\varsigma}}$ and $\alpha_{\boldsymbol{\vartheta}}$, to ensure convergence. Moreover, it does not guarantee a globally optimal solution for non-convex optimization problems and requires additional algorithmic development to improve its performance in such scenarios \cite{hao2024joint,Li:2015:NIPS}.
Here, for  completeness. we provide the SCA-based method to solve the optimization problem in \eqref{eq:optpg}. This approach can offer better performance due to its ability to find a globally stationary solution point for non-convex problems, albeit at the cost of increased computational complexity. The joint optimization problem can be reformulated as
\begin{subequations}\label{eq:scaopt}
\begin{align}
&\min\limits_{\tilde{\mathbf{v}}^\varsigma} \tilde{\mathbf{g}}(\tilde{\mathbf{v}}^\varsigma)  ,\\
 &\text {s.t.} ~  \eqref{eq:cons2g},\eqref{eq:cons3g},\eqref{eq:cons5g},\eqref{eq:cons6g},\eqref{eq:tha},\eqref{eq:vw}, \nonumber\\
 &0 \leq a_{n,u} \leq 1 , 0 \leq \bar{a}_{n,m} \leq 1,~ \forall n,u,m, \label{eq:sconstan}
 \\
&t_u^\varsigma \leq  \mathrm{SE}_{u}^{\varsigma} (\boldsymbol{\theta}^\varsigma) ~, ~ \bar{t}_{m,k}^\varsigma \leq  \mathrm{SE}_{m,k}^{\varsigma} (\boldsymbol{\theta}^\varsigma), ~~\forall ~ u,m,k, \label{eq:sconstse} \\
& t_u^\varsigma \ge  \mathrm{SE}_{QoS} ~, ~ \bar{t}_{m,k}^\varsigma \ge  \bar{\mathrm{SE}}_{QoS}, ~~\forall ~ u,m,k, \label{eq:sconstqs} \\
& \sum_{u \in \mathcal {U}} a_{n,u} \hat{t}_u^\varsigma + \sum_{m \in \mathcal{M}}\bar{a}_{n,m}\sum_{k \in \mathcal{K}_m}  \hat{\bar{t}}_{m,k}^\varsigma \leq C_{\mathrm{max},n},~~ \forall ~ n, \label{eq:sconstcm}\\
& \widetilde{\mathrm{SE}}_{u}^{\varsigma} (\boldsymbol{\theta}^\varsigma,\mathbf{w}^\varsigma) \leq \hat{t}_u^\varsigma ~,~ \widetilde{\mathrm{SE}}_{m,k}^{\varsigma} (\boldsymbol{\theta}^\varsigma,\mathbf{w}^\varsigma)\! \leq \!\hat{\bar{t}}_{m,k}^\varsigma, ~\forall  u,m,k, \label{eq:sconsset}
\end{align}
\end{subequations}
where $\tilde{\mathbf{v}}^\varsigma \triangleq \{\boldsymbol{\theta}^\varsigma, \mathbf{a},\mathbf{w}^\varsigma,\hat{\mathbf{t}}^\varsigma,\mathbf{t}^\varsigma\}$, $\tilde{\mathbf{g}}(\tilde{\mathbf{v}}^\varsigma)\triangleq -\Big(w_1 \sum_{u \in \mathcal {U}} t_u^\varsigma  + w_2 \sum_{m \in \mathcal{M}} \sum_{k \in \mathcal{K}_m} \bar{t}_{m,k}^\varsigma \Big) + \lambda \Big(S_u(\mathbf{a})+\bar{S}_{m}(\mathbf{a})\Big) $, and $\lambda$ is the Lagrangian multiplier. Further,  $\mathbf{w}^\varsigma \triangleq \{\mathrm{w}_u^\varsigma, \mathrm{w}_{m,k}^\varsigma\}$ is the new variable given as
\begin{align}\label{eq:vw}
&V_u^\varsigma(\boldsymbol{\theta}^\varsigma) \ge \mathrm{w}_u^\varsigma, \nonumber\\
& V_{m,k}^\varsigma(\boldsymbol{\theta}^\varsigma) \ge \mathrm{w}_{m,k}^\varsigma.
\end{align}
$\hat{\mathbf{t}}^\varsigma \triangleq\{\hat{t}_u^\varsigma, \hat{\bar{t}}_{m,k}^\varsigma\}$ and $\mathbf{t}^\varsigma\triangleq\{t_u^\varsigma, \bar{t}_{m,k}^\varsigma\}$ are the auxiliary variables defined in \eqref{eq:sconstse} and \eqref{eq:sconsset}, respectively.
Moreover, we have 
\begin{align}\label{eq:seuvt}
\mathrm{SE}_{u}^{\varsigma} (\boldsymbol{\theta}^\varsigma) \leq \widetilde{\mathrm{SE}}_{u}^{\varsigma} (\boldsymbol{\theta}^\varsigma,\mathbf{w}^\varsigma) \triangleq \frac{T-\tau}{T} \log_{2}\left({1+\frac{ (U_u^\varsigma(\boldsymbol{\theta}^\varsigma))^2}{\mathrm{w}_u^\varsigma}} \right), 
\end{align}
and
\begin{align}\label{eq:semkvt}
    \mathrm{SE}_{m,k}^{\varsigma} (\boldsymbol{\theta}^\varsigma)\! \leq\! \widetilde{\mathrm{SE}}_{m,k}^{\varsigma} (\boldsymbol{\theta}^\varsigma\!,\mathbf{w}^\varsigma) \!\triangleq\!\! \frac{T-\tau}{T} \log_{2}\!\left(\!\!{1\!+\!\frac{(U_{m,k}^\varsigma(\boldsymbol{\theta}^\varsigma))^2}{\mathrm{w}_{m,k}^\varsigma}} \!\right).
\end{align}
Following the same approach as in \cite{hao2024joint} and \cite{Mohammadi:JSAC:2024}, the convex lower bound of $\mathrm{SE}_{u}^{\varsigma}(\boldsymbol{\theta}^\varsigma)$ for unicast users in \eqref{eq:seuvt} is given by:
\begin{align}
\hat{\mathrm{SE}}_{u}^{\varsigma} (\boldsymbol{\theta}^\varsigma)\ &\triangleq  \frac{T-\tau}{T \log 2} \Bigg[ \log \bigg(1+\frac{\big((U_u^\varsigma)^{(i)}\big)^2}{(\mathrm{w}_u^\varsigma)^{(i)}}\bigg)  -\frac{\big((U_u^\varsigma)^{(i)}\big)^2}{(\mathrm{w}_u^\varsigma)^{(i)}} \nonumber\\ 
& + 2 \frac{(U_u^\varsigma)^{(i)} U_u^\varsigma}{(\mathrm{w}_u^\varsigma)^{(i)}} -\frac{\big((U_u^\varsigma)^{(i)}\big)^2\big((U_u^\varsigma)^2+\mathrm{w}_u^\varsigma\big)}{(\mathrm{w}_u^\varsigma)^{(i)}\Big(\big((U_u^\varsigma)^{(i)}\big)^2+(\mathrm{w}_u^\varsigma)^{(i)}\Big)} \Bigg].
\end{align}
Similarly, for multicast uses, the convex lower bound of $\mathrm{SE}_{m,k}^{\varsigma} (\boldsymbol{\theta}^\varsigma)$ in \eqref{eq:semkvt} is
\vspace{-0.5em}
\begin{align}
\hat{\mathrm{SE}}_{m,k}^{\varsigma}  &\triangleq  \frac{T-\tau}{T \log 2} \Bigg[ \log \bigg(1+\frac{\big((U_{m,k}^\varsigma)^{(i)}\big)^2}{(\mathrm{w}_{m,k}^\varsigma)^{(i)}}\bigg)-\frac{\big((U_{m,k}^\varsigma)^{(i)}\big)^2}{(\mathrm{w}_{m,k}^\varsigma)^{(i)}} \nonumber \\ 
& \hspace{-1cm}+\! 2 \frac{(U_{m,k}^\varsigma)^{(i)} U_{m,k}^\varsigma}{(\mathrm{w}_{m,k}^\varsigma)^{(i)}} \!-\!\frac{\big((U_{m,k}^\varsigma)^{(i)}\big)^2\big((U_{m,k}^\varsigma)^2+\mathrm{w}_{m,k}^\varsigma\big)}{(\mathrm{w}_{m,k}^\varsigma)^{(i)}\Big(\big(U_{m,k}^{(i)}\big)^2\!+\!(\mathrm{w}_{m,k}^\varsigma)^{(i)}\Big)} \Bigg].
\end{align}
We point out that $\widetilde{\mathrm{SE}}_{u}^{\varsigma} (\boldsymbol{\theta}^\varsigma,\mathbf{w}^\varsigma)$ in \eqref{eq:seuvt} and $\widetilde{\mathrm{SE}}_{m,k}^{\varsigma} (\boldsymbol{\theta}^\varsigma,\mathbf{w}^\varsigma)$ in \eqref{eq:semkvt} have convex upper bounds, given respectively by

\begin{align}
\bar{\mathrm{SE}}_{u}^{\varsigma} (\boldsymbol{\theta}^\varsigma,\mathbf{w}^\varsigma) &\triangleq  \frac{T-\tau}{T \log 2} \bigg[ \log \Big(\big((U_u^\varsigma)^{(i)}\big)^2 \nonumber \\
&\hspace{-3.3em}+\! (\mathrm{w}_u^\varsigma)^{(i)}\!\Big)\!\! +\! \frac{ (U_u^\varsigma)^2 + \mathrm{w}_u^\varsigma }{\big((U_u^\varsigma)^{(i)}\big)^2+(\mathrm{w}_u^\varsigma)^{(i)}}\! -\! 1\! -\! \log(\mathrm{w}_u^\varsigma) \bigg], 
\end{align}
and
\begin{align}
\bar{\mathrm{SE}}_{m,k}^{\varsigma} (\boldsymbol{\theta}^\varsigma,\mathbf{w}^\varsigma) & \triangleq  \frac{T-\tau}{T \log 2} \bigg[ \log \Big(\big((U_{m,k}^\varsigma)^{(i)}\big)^2 + (\mathrm{w}_{m,k}^\varsigma)^{(i)}\Big) \nonumber \\
&\hspace{-4em}+ \frac{ (U_{m,k}^\varsigma)^2 + \mathrm{w}_{m,k}^\varsigma }{\big((U_{m,k}^\varsigma)^{(i)}\big)^2+(\mathrm{w}_{m,k}^\varsigma)^{(i)}} - 1 - \log(\mathrm{w}_{m,k}^\varsigma) \bigg].
\end{align}
Using the first-order Taylor series expansion, we can obtain a convex upper bound for $S_u(\mathbf{a})$ and $\bar{S}_{m}(\mathbf{a})$, respectively, as
\begin{align}
\widehat{S}_u(\mathbf{a}) 
\triangleq
 \sum_{u \in \mathcal {U}} \sum_{n \in \mathcal{N}} \big(a_{n,u}- 2 a_{n,u} a_{n,u}^{(i)} +(a_{n,u}^{(i)})^2\big),
\end{align}
\vspace{-0.3em}
and
\begin{align}
 \widehat{\bar{S}}_{m}(\mathbf{a}) 
\triangleq  \sum_{m \in \mathcal{M}} \sum_{n \in \mathcal{N}} \big(\bar{a}_{n,m} - 2 \bar{a}_{n,m} \bar{a}_{n,m}^{(i)} + (\bar{a}_{n,m}^{(i)})^2\big).
\end{align}
Similarly, we can get a convex upper bound for \eqref{eq:sconstcm} using the first-order Taylor series expansion, as

\begin{align}\label{eq:scacmax}
    & 0.25\! \Big[\!\! \sum_{u \in \mathcal {U}}\!\! \big(\!a_{n,u}\! +\! \hat{t}_u^\varsigma\big)^2\!\!\!\! - \!\!2 \big(   a_{n,u}^{(i)} \!\!\!-\! (\hat{t}_u^\varsigma)^{(i)}\big) \!\big(a_{n,u}\!\!\! - \hat{t}_u^\varsigma\big) \!\!+\!\! \big(   a_{n,u}^{(i)}\!\!\! -\! (\hat{t}_u^\varsigma)^{(i)}\!\big)^2   \nonumber\\
  & +  \sum_{m \in \mathcal{M}} \big(   \bar{a}_{n,m} + \sum_{k \in \mathcal{K}_m} \hat{\bar{t}}_{m,k}^\varsigma\big)^2 - 2 \big(   \bar{a}_{n,m}^{(i)} - \sum_{k \in \mathcal{K}_m} (\hat{\bar{t}}_{m,k}^\varsigma)^{(i)}\big) \nonumber\\
  &\big(   \bar{a}_{n,m} - \sum_{k \in \mathcal{K}_m} \hat{\bar{t}}_{m,k}^\varsigma\big) + \big(   \bar{a}_{n,m}^{(i)} - \sum_{k \in \mathcal{K}_m} (\hat{\bar{t}}_{m,k}^\varsigma)^{(i)}\big)^2  \Big] \leq C_{\mathrm{max},n}.
\end{align}
Finally, problem \eqref{eq:scaopt} is approximated by the following convex problem
\vspace{- 0.5 em}
\begin{align}\label{eq:scaoptf}
\min\limits_{\tilde{\mathbf{v}}^\varsigma \in \tilde{\mathcal{V}}} \tilde{\mathcal{G}}(\tilde{\mathbf{v}}^\varsigma),
\end{align}
where $\tilde{\mathcal{V}} \triangleq\{\eqref{eq:cons2g},\eqref{eq:cons3g},\eqref{eq:cons5g},\eqref{eq:cons6g},\eqref{eq:tha},\eqref{eq:sconstan},\eqref{eq:sconstse},\eqref{eq:sconstqs},\\ \eqref{eq:sconsset},\eqref{eq:vw},\eqref{eq:scacmax}\}$ and $\tilde{\mathcal{G}}(\tilde{\mathbf{v}^\varsigma}) \triangleq -\big(w_1 \sum_{u \in \mathcal {U}} t_u^\varsigma  + w_2 \sum_{m \in \mathcal{M}} \sum_{k \in \mathcal{K}_m} \bar{t}_{m,k}^\varsigma \big) + \lambda \big(\widehat{S}_u(\mathbf{a})+\widehat{\bar{S}}_{m}(\mathbf{a})\big)$.  \textbf{Algorithm~\ref{algo2}} is introduced to solve problem \eqref{eq:scaoptf} and obtain the optimized solution $\tilde{\mathbf{v}}^{\varsigma *}$.
 \subsection{Computational Complexity}\label{complex}
The main components of the APG-based method in \textbf{Algorithm~\ref{algo}} are the gradient of the objective function and the projection function, as defined in \eqref{eq:gradthanu}, \eqref{eq:gradthanm}, and \eqref{eq:prothn}, \eqref{eq:prozn}, respectively. The computational complexity of the gradient function is $\mathcal{O}(N(U+K_M)^2)$, while the projection function has a complexity of $\mathcal{O}(N(U+K_M))$, where $K_M$ denotes the total number of multicast users. Therefore, the upper bound on the per-iteration complexity is $\mathcal{O}(N(U+K_M)^2)$.

In contrast, the complexity of the SCA-based method in \textbf{Algorithm~\ref{algo2}}, is $\mathcal{O}(\sqrt{A_l + A_q}(A_v + A_l + A_q)A_v^2)$, where $A_v \triangleq 2N(U+K_M) + (U + K_M)$, $A_l \triangleq 3N(U+K_M) + N + 4U + 4K_M$, and $A_q \triangleq 2N + N(U+K_M) + U + K_M$ \cite{Mohammadi:JSAC:2024}. Consequently, the SCA-based approach exhibits greater computational complexity than the APG-based method, and requiring more time to solve the optimization problem.
Table \ref{table2} reports the average running time for the optimization problem \eqref{eq:optpg}, using both the APG-based and SCA-based methods implemented on the same computing platform. For MR  with $N = 50$, $U = 7$, and $K_M = 48$, the APG-based method is approximately 10 times faster than the SCA-based method. Moreover, under a larger system configuration with $N = 200$, $U = 7$, and $K_M = 48$, the APG-based method is around 61 times faster. Additionally, the APG-based method achieves a 42-fold speedup when $N = 200$, $U = 10$, and $K_M = 60$. These results demonstrate the APG approach significantly outperforms that the SCA approach in terms of computational efficiency, especially in large-scale systems.
\begin{table}[t]
\centering
\caption{Comparison of average running time (seconds)}
\begin{tabular}{|c|c|c|}
\hline
\textbf{System setup} & \textbf{APG time}  & \textbf{SCA time} \\ 
\hline 
$\MR$, $N\!=\!50$, $U\!=\!7$, $K_M\!=\!48$ & $11.08 ~\mathrm{s}$ & $113.21 ~ \mathrm{s}$ \\
\hline
$\MR$, $N\!=\!200$, $U\!\!=\!7$, $K_M\!=\!48$ & $17.72 ~ \mathrm{s}$ & $1077.3 ~ \mathrm{s}$ \\
\hline
$\MR$, \!$N\!=\!200$, \!$U\!\!=\!10$, $K_M\!=\!60$ & $46.6 ~ \mathrm{s}$ & $1924 ~ \mathrm{s}$ \\
\hline
\end{tabular}
\label{table2}
\end{table}

\textit{\textbf{Remark 2:} Note that in this work, we assume imperfect CSI, where precoding vectors are computed based on instantaneous CSI including estimation errors, while the resource allocation optimization approaches rely solely on statistical CSI. The achievable SE expressions are derived using the “use-and-then-forget” bounding technique, which inherently captures the impact of channel estimation errors. This approach ensures that performance degradation due to CSI imperfections is already reflected in the analysis. Furthermore, in practical scenarios with more severe estimation inaccuracy or latency, robustness can be enhanced through techniques, such as advanced pilot design, adaptive CSI update strategies, or robust optimization formulations \cite{Wang:TSP:2025}.}

\section{Numerical Results}
\subsection{Parameters and Network Setup} \label{result1}
We assume that there are $N$ APs, each equipped with  $L=12$ antennas, to simultaneously  serve $U$ unicast users and $M = 4$ multicast groups, each consisting of $K_m$ users, while all the users and APs are  randomly distributed within an area of size $1 \times 1$ km$^2$.    The pilot length is $\tau=U+M$, while the bandwidth is set to $B=20$ MHz. The large-scale fading coefficients $\beta_{n,u}$ and $\bar{\beta}_{n,m,k}$ are modeled as  \cite{Björnson:TWC:2020}
$\beta_{n,u} = 10^{\frac{\text{PL}_{n,u}^d}{10}}10^{\frac{F_{n,u}}{10}}$ and
$\bar{\beta}_{n,m,k} = 10^{\frac{\text{PL}_{n,k_m}^d}{10}}10^{\frac{F_{n,k_m}}{10}}$, respectively, where $10^{\frac{\text{PL}_{n,u}^d}{10}}$ and $10^{\frac{\text{PL}_{n,k_m}^d}{10}}$ are the path loss, $10^{\frac{F_{n,u}}{10}}$ and $10^{\frac{F_{n,k_m}}{10}}$ denote the shadowing effect with $F_{n,u}\in\mathcal{N}(0,4^2)$ and $F_{n,k_m}\in\mathcal{N}(0,4^2)$ (in dB) for unicast and multicast users, respectively. Also, $\text{PL}_{n,u}^d$ and $\text{PL}_{n,k_m}^d$ are in dB and can be calculated as
$\text{PL}_{n,u}^d = -30.5-36.7\log_{10}\Big(\frac{d_{n,u}}{1\,\text{m}}\Big)$
and
$\text{PL}_{n,k_m}^d = -30.5-36.7\log_{10}\Big(\frac{d_{n,k_m}}{1\,\text{m}}\Big)$ \cite{Björnson:TWC:2020}.
The correlation among the shadowing terms from the $n$-th AP to different $g \in \mathcal{U} \cup \mathcal{M}$  unicast and multicast users can be given by  
\begin{align}
	 \mathbb{E}\{F_{n,g}F_{j,g'}\} \triangleq \begin{cases} 4^22^{-\upsilon_{g,g'}/9\,\text{m}}, & j = n, \\ 0, & \text{otherwise}, \end{cases}
\end{align}
where $\upsilon_{g,g'}$ is the physical distance between users $g$ and $g'$. 
The maximum transmission power for each AP is $p_{\text{dl}} = 1$ W, and for each user  is $p_{\text{ul}} = 0.1$ W, while the noise power is $-92$ dBm.
The minimum SE requirements for the  unicast users and multicast users are chosen to be $\bar{SE}_{QoS}=SE_{QoS}=0.2$ bit/s/Hz. Unless otherwise stated, we set $w_1=w_2=0.5$. 
\vspace{-0.5em}
\subsection{Results and Discussions}\label{result2}

\begin{figure}[t]
\centering 
\begin{subfigure}[b]{0.5\textwidth}
\centering 
\includegraphics[width=1\textwidth]{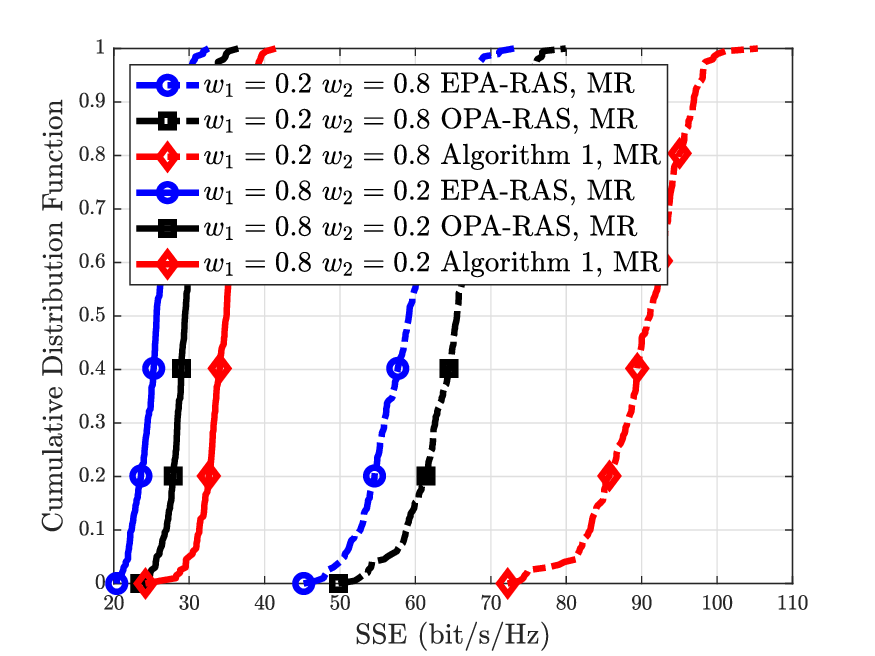}
\caption{MR}
\label{fig2a}
\end{subfigure}
 
\begin{subfigure}[b]{0.5\textwidth}
\centering 
\includegraphics[width=1\textwidth]{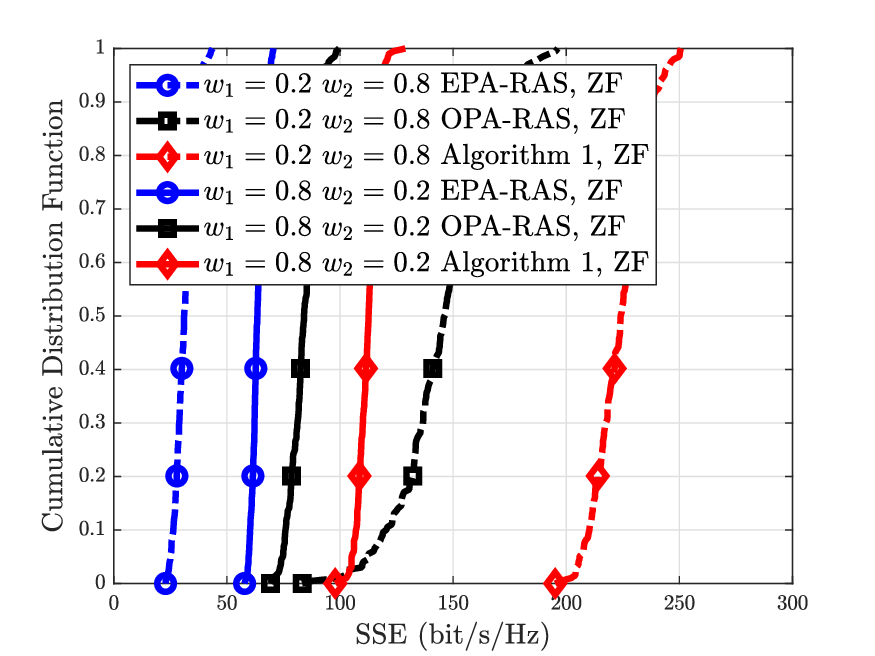}
\caption{ZF}
\label{fig2b}
\end{subfigure}
\centering 
 \caption{CDF of the SSE, where $U=7$,  $K_m=12$, $N=60$.}
 \label{fig2}
\end{figure}
\begin{figure}[t] 
\centering
\includegraphics[width=0.5\textwidth]{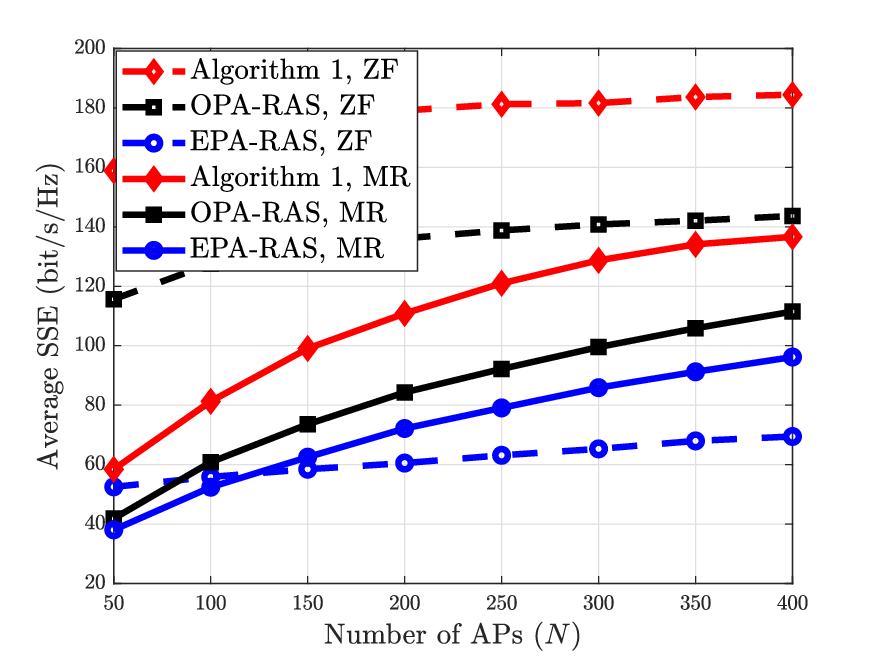}
\centering
\caption{Average SSE against the number of APs $N$, where $U=7$,  $K_m=12$.}
\label{fig3}
\end{figure}

\begin{figure}[t] 
\centering
\includegraphics[width=0.5\textwidth]{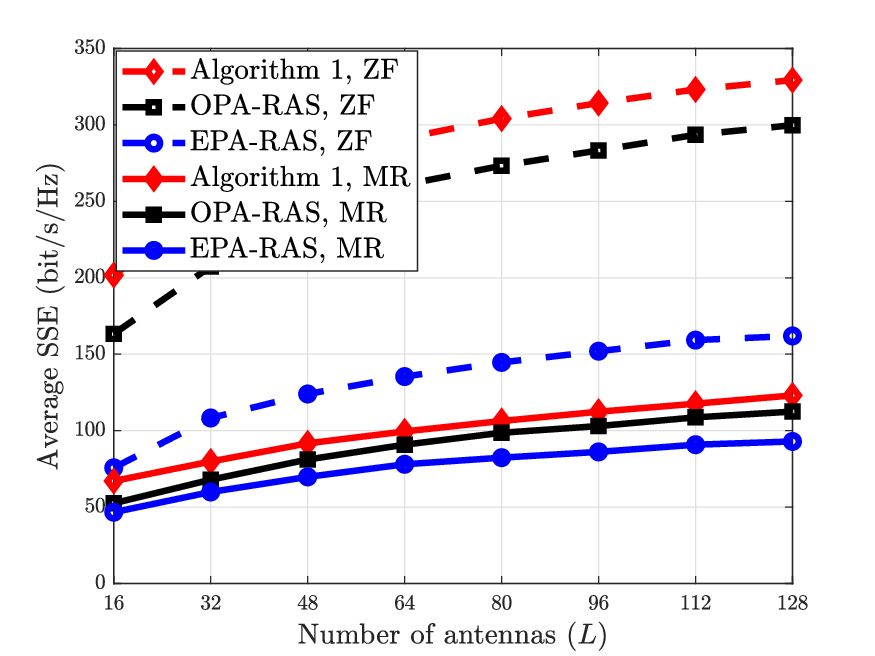}
\centering
\caption{Average SSE against the number of antennas $L$, where, $N=60$, $U=7$,  $K_m=12$.}
\label{fig4}
\end{figure}

\begin{figure}[t] 
\centering
\includegraphics[width=0.5\textwidth]{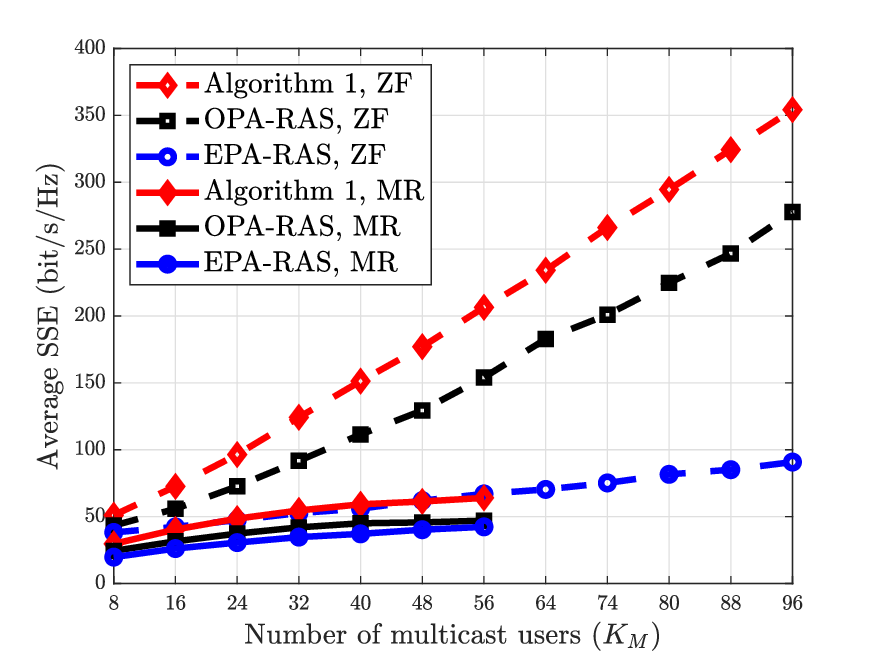}
\centering
\caption{Average  SSE against the number of multicast users,  where  $U=5$, $N=60$.}\label{fig5}
\end{figure}

\begin{figure}[t] 
\centering
\includegraphics[width=0.5\textwidth]{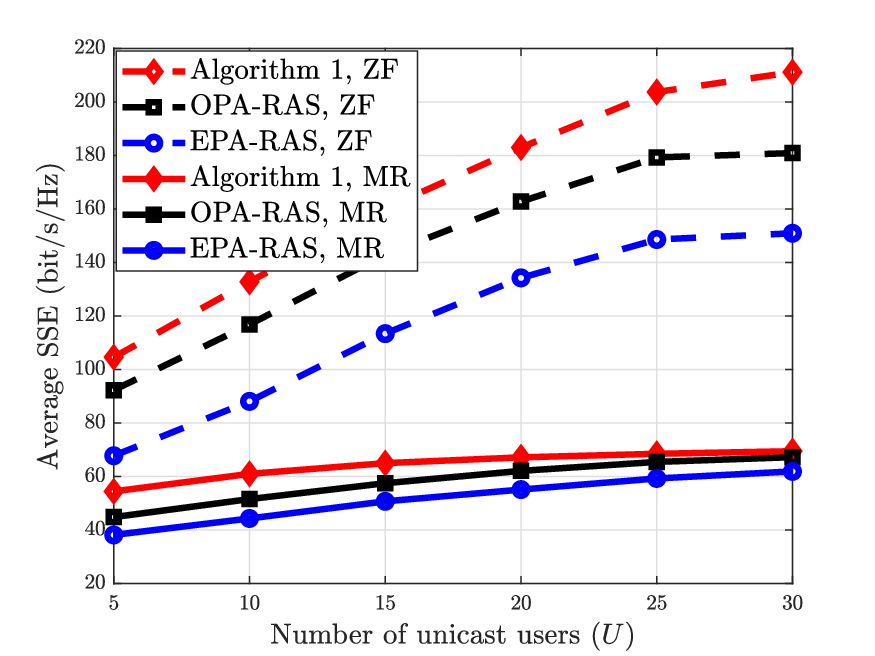}
\centering
\caption{Average  SSE against the number of unicast users,  where  $K_M=12$, $N=60$, $L=36$.}\label{fig5b}
\end{figure}

\begin{figure}[t] 
\centering
\includegraphics[width=0.5\textwidth]{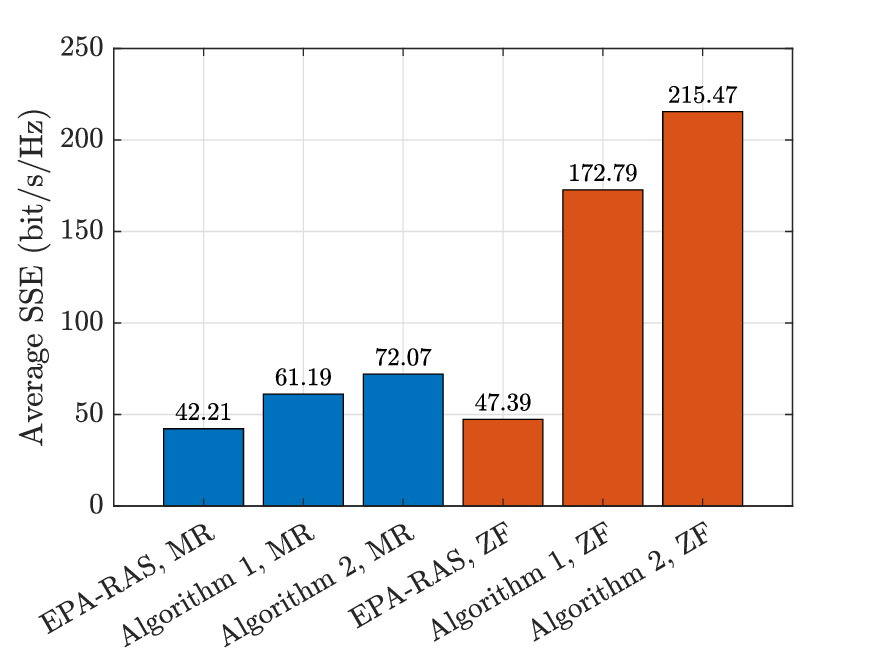}
\centering
\caption{Comparison of the average SSE achieved using \textbf{Algorithm~\ref{algo}} and \textbf{Algorithm \ref{algo2}}, where $U=7$, $K_m=12$, $N=60$.}\label{fig6}
\end{figure}

Figure~\ref{fig2} demonstrates the effectiveness of our proposed APG-based method for joint power allocation and AP selection, as outlined in \textbf{Algorithm~\ref{algo}}, with MR or ZF precoding. We evaluate the cumulative distribution function (CDF) of the SSE achieved by our optimized approach, comparing it against two benchmark schemes: (i) equal power allocation (EPA) with random AP selection (RAS), and (ii) optimized power allocation (OPA) with RAS. The evaluation is performed across different weighting coefficients, $w_1$ and $w_2$. The proposed optimization framework yields significant improvements over both benchmarks. Specifically, in Figure \ref{fig2a}, for $w_1 = 0.2$ and $w_2 = 0.8$, the SSE improves by $39\%$ and $54\%$ compared to the EPA–RAS and OPA–RAS schemes, respectively. Similarly, as shown in Fig. \ref{fig2b}, the gains with ZF precoding reach $53\%$ and up to $620\%$ under the same weighting coefficients. Moreover, the performance advantage of our method becomes more pronounced as $w_2$ increases.

Figure~\ref{fig3} illustrates the impact of the number of APs on the SSE performance of the joint unicast-multicast CF-mMIMO system relying on our proposed \textbf{Algorithm~\ref{algo}}. As the number of APs increases, the system performance improves for both ZF and MR precoding schemes, owing to the higher macro diversity gain—particularly pronounced in the case of MR precoding. Moreover, due to the favorable propagation property, the performance gap between ZF and MR  narrows when a large number of APs are deployed. It is also noteworthy that \textbf{Algorithm~\ref{algo}} maintains strong efficiency across a wide range of  APs configurations in the system. Fig. \ref{fig4} investigates the impact of the number of antennas at each AP on the SSE performance of the joint unicast–multicast CF-mMIMO system based on our proposed \textbf{Algorithm~\ref{algo}}. The results confirm that the proposed joint AP selection and power allocation framework scales effectively with increased antenna counts, preserving high SE.

Figure~\ref{fig5} examines the effect of number of   multicast users on the average  SSE. It is observed that the SSE performance of the joint unicast-multicast CF-mMIMO system significantly improves as the number of multicast users increases relying on  ZF precoding design and our proposed AP selection and power allocation \textbf{Algorithm~\ref{algo}}. However, for MR precoding, when the number of multicast users exceeds $56$, the joint unicast-multicast CF-mMIMO system fails to satisfy the required QoS level ($SE_{QoS}$) for all users. Figure~\ref{fig5b} shows the impact of the number of unicast users on the average SSE. The proposed \textbf{Algorithm~\ref{algo}} provides approximately $60\%$ and $40\%$ SSE gain for the joint unicast–multicast CF-mMIMO system with ZF precoding, when there are $5$ and $30$ unicast users in the system, respectively. 

Figure~\ref{fig6} presents a performance comparison between \textbf{Algorithm~\ref{algo}}, which is APG-based, and \textbf{Algorithm~\ref{algo2}}, which is based on the SCA approach. \textbf{Algorithm~\ref{algo2}} yields a $17.7\%$ performance improvement over \textbf{Algorithm~\ref{algo}} for MR precoding. Furthermore, for ZF precoding, \textbf{Algorithm~\ref{algo2}} achieves a $24.7\%$ improvement relative to \textbf{Algorithm~\ref{algo}}. Note that \textbf{Algorithm~\ref{algo2}} has significantly higher computational complexity than \textbf{Algorithm~\ref{algo}}, especially when the network size is large, as discussed in Section \ref{complex}.

 Finally, Fig.\ref{fig8} presents a case study illustrating the effectiveness of our proposed joint power allocation and AP selection approach. Specifically, the figure shows the SE per user for both the EPA–RAS scheme and \textbf{Algorithm \ref{algo}} under a ZF precoding design in a CF-mMIMO system with $U=3$ unicast users and $M=3$ multicast groups, each containing $K_m=2$ users. In this setup, users labeled $1, 2, 3$ represent unicast users, while those labeled $x_i$ (e.g., $3_2$) represent the $i$-th user in multicast group $x$.
To illustrate the AP–user associations in our case study, we construct the corresponding AP–user association matrix of dimension  $N\times(U+M)$, based on the binary variables $a_{n,u}$ and $\bar{a}_{n,m}$    defined in \eqref{eq:anu}–\eqref{eq:anm}. This matrix represents the serving relationships between APs, unicast users, and multicast groups, where each row corresponds to an AP and each column corresponds to a user or multicast group.
The AP selection matrices corresponding to the EPA–RAS and the proposed \textbf{Algorithm~\ref{algo}}, i.e., $\mathbf{A}_{\text{EPA–RAS}}$ and $\mathbf{A}_{\text{Algorithm~\ref{algo}}}$,  are obtained as
\begin{figure}[t] 
\centering
\includegraphics[width=0.5\textwidth]{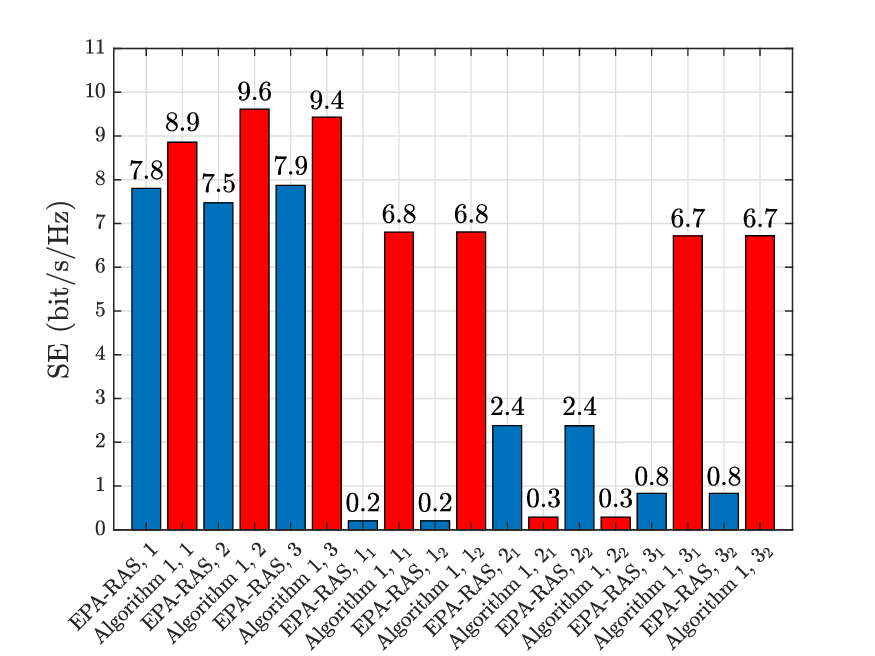}
\centering
\caption{Per-user SE, where $U=3$, $M=3$, $K_m=2$, $N=5$.}\label{fig8}
\end{figure}
\begin{equation}
\mathbf{A}_{\text{EPA–RAS}}=
\begin{bmatrix}
1 & 1 & 1 & 1 & 1 & 1 \\
1 & 0 & 1 & 1 & 0 & 0 \\
0 & 1 & 1 & 0 & 0 & 1 \\
0 & 1 & 0 & 0 & 0 & 0 \\
0 & 0 & 0 & 0 & 1 & 1 \\
\end{bmatrix}, \end{equation}
and
\begin{equation}
\qA_{\text{Algorithm~\ref{algo}}}=
\begin{bmatrix}
1 & 0 & 1 & 0 & 0 & 0 \\
1 & 0 & 1 & 0 & 1 & 0 \\
1 & 1 & 1 & 0 & 0 & 0 \\
1 & 0 & 1 & 0 & 0 & 1 \\
0 & 1 & 1 & 1 & 0 & 0 \\
\end{bmatrix}.
\end{equation}
The total SSE achieved by EPA–RAS is $15$ bit/s/Hz, whereas \textbf{Algorithm~\ref{algo}} achieves $27.7$ bit/s/Hz. It is clear that the minimum SE requirements for both unicast and multicast users are successfully met at the given $\bar{SE}_{QoS}=SE_{QoS}=0.2$ bit/s/Hz.
\section{Conclusion}
We evaluated the performance of a CF-mMIMO system under simultaneous multicast and unicast transmissions, considering both MR and ZF precoding schemes. To maximize the weighted SSE while satisfying strict AP transmit power and user QoS requirements, we proposed a large-scale fading-based joint power allocation and user association optimization framework, leveraging an  APG method. The framework also incorporates practical constraints, such as limits on the number of unicast users and multicast groups per AP, to effectively manage fronthaul overhead.
Numerical results demonstrated that the proposed APG-based approach significantly outperforms conventional heuristic methods in terms of SSE, while achieving faster convergence and considerably lower computational complexity than the SCA-based benchmark. These results highlight the proposed framework’s strong potential for practical implementation in large-scale CF-mMIMO networks, offering an effective and scalable solution for managing mixed unicast and multicast traffic in next-generation wireless systems. Although the focus of this work was on maximizing the SSE, our design choices—such as AP selection and power allocation—also contribute to improved EE. Additionally, the lower complexity of the APG algorithm offers further energy savings. These elements lay the groundwork for future extensions toward joint SE and EE optimization. Moreover, in the context of joint unicast-multicast optimization, Pareto-based multi-objective optimization, by identifying optimal trade-offs between unicast and multicast performance, represents a promising direction for future research.

\appendices
\section{Proof of Proposition 1}
\label{proumr}
The numerator of \eqref{eq:sinru} is computed as
\vspace{-0.5em}
\begin{align} \label{eq:dsub}
\text{DS}_u^{\varsigma}&=\sqrt{p_{\text{dl}}} ~\mathbb{E} \left\{\sum_{n \in \mathcal{N}} a_{n,u} \sqrt{\eta_{n,u}} \mathbf{c}_{n,u}^H \boldsymbol{b}_{n,u}^\varsigma \right\} 
\nonumber\\
& = \sqrt{p_{\text{dl}}}\sum_{n \in \mathcal{N}} a_{n,u} \sqrt{\eta_{n,u}} ~\mathbb{E} \left\{ \mathbf{\hat{c}}_{n,u}^H \boldsymbol{b}_{n,u}^\varsigma \right\}.
\end{align} 
For a given precoding scheme, i.e., MR and ZF, substituting \eqref{eq:bnu} into \eqref{eq:dsub} yields
\begin{align} \label{eq:dsua} 
\text{DS}_u^{\varsigma} = 
\begin{cases}    \sqrt{p_{\text{dl}}}L\sum\limits_{n \in \mathcal{N}} a_{n,u} \sqrt{\eta_{n,u}} \mathbf\gamma_{n,u}, & \varsigma = \MR,
    \\
\sqrt{p_{\text{dl}}}\sum\limits_{n \in \mathcal{N}} a_{n,u} \sqrt{\eta_{n,u}}, & \varsigma =\ZF,
\end{cases}
\end{align}
where \eqref{eq:dsub} is derived based on the uncorrelated nature of the channel estimation error $\mathbf{\Tilde{c}}_{n,u}$ and the channel estimate $\mathbf{\hat{c}}_{n,u}$. For the computation of \eqref{eq:dsua}  with MR, we have exploited $\mathbb{E} \big\{ \mathbf{\hat{c}}_{n,u}^H \mathbf{\hat{c}}_{n,u} \big\}=L\mathbf\gamma_{n,u}$.
On the other hand, the first term in the denominator of \eqref{eq:sinru} is calculated as
\begin{align}
 \label{eq:buua}
 &  \mathbb{E} \left\{\left| \text{BU}_u^{\varsigma} \right|^2\right\}=\nonumber\\
 &p_{\text{dl}}  \sum_{n \in \mathcal{N}} a_{n,u} \eta_{n,u} \Big(\mathbb{E}\left\{\left|\mathbf{c}_{n,u}^H \boldsymbol{b}_{n,u}^\varsigma - \mathbb{E} \left\{ \mathbf{c}_{n,u}^H\boldsymbol{b}_{n,u}^\varsigma \right\}\right|^2 \right\}\Big) 
\nonumber\\
&= p_{\text{dl}}  \sum_{n \in \mathcal{N}} a_{n,u} \eta_{n,u} \Big(\!\mathbb{E}\left\{\left|\mathbf{\Tilde{c}}_{n,u}^H \boldsymbol{b}_{n,u}^\varsigma\right|^2\right\}  +\mathbb{E}\left\{\left|\mathbf{\hat{c}}_{n,u}^H \boldsymbol{b}_{n,u}^\varsigma\right|^2\right\}\nonumber\\
&\hspace{14em}-\left|\mathbb{E} \left\{\mathbf{c}_{n,u}^H  \boldsymbol{b}_{n,u}^\varsigma \right\}\right|^2 \Big)
\nonumber\\
& =
\begin{cases}
     p_{\text{dl}} L\sum\limits_{n \in \mathcal{N}} a_{n,u} \eta_{n,u}  \gamma_{n,u} \beta_{n,u}, & \varsigma = \MR,
    \\
   p_{\text{dl}}  \sum\limits_{n \in \mathcal{N}}  a_{n,u} \eta_{n,u} \frac{(\beta_{n,u}-\gamma_{n,u})}{(L-U-M)\mathbf\gamma_{n,u}},  & \varsigma = \ZF,
\end{cases} 
\end{align}
where we have exploited $\mathbb{E}\left\{|\mathbf{\Tilde{c}}_{n,u}^H \mathbf{\hat{c}}_{n,u} |^2\right\}=L \gamma_{n,u} (\beta_{n,u}-\gamma_{n,u})$ and $\mathbb{E}\big\{\| \mathbf{\hat{c}}_{n,u}\|^4\big\}=L(L+1) \gamma_{n,u}^2$ for the MR scheme. Additionally, for the ZF scheme, we have utilized the central complex Wishart matrix identity, $\mathbb{E} \big\{ (\boldsymbol{b}_{n,u}^\ZF)^H\boldsymbol{b}_{n,u}^\ZF \big\}=\frac{1}{(L-U-M)\mathbf\gamma_{n,u}}$, as given in \cite[Lemma 2.10]{tulino04}. Thus, to implement ZF, the condition $L > U+M$ must be satisfied. Similarly,  $\mathbb{E} \big\{| \text{UI}_{u,u'}^\varsigma |^2\big\}$ and $\mathbb{E} \big\{| \text{MI}_{u,m}^\varsigma |^2\big\}$ can be computed, respectively, as
\vspace{-0.5em}
\begin{align} \label{eq:uiuua}
&\mathbb{E} \left\{\left| \text{UI}_{u,u'}^\varsigma \right|^2\right\}=p_{\text{dl}} \mathbb{E} \left\{\left| \sum_{n \in \mathcal{N}} a_{n,u'} \sqrt{\eta_{n,u'}} \mathbf{c}_{n,u}^H  \boldsymbol{b}_{n,u'}^\varsigma \right|^2 \right\} 
\nonumber\\ 
&=  p_{\text{dl}}  \sum_{n \in \mathcal{N}} a_{n,u'} \eta_{n,u'} \bigg(\mathbb{E}\left\{\left|\mathbf{\Tilde{c}}_{n,u}^H\boldsymbol{b}_{n,u'}^\varsigma \right|^2 +\left|\mathbf{\hat{c}}_{n,u}^H\boldsymbol{b}_{n,u'}^\varsigma\right|^2\right\}  \bigg) \nonumber\\
&=\!\!
    \begin{cases}
        \!p_{\text{dl}} L\!\!\sum\limits_{n \in \mathcal{N}} a_{n,u'} \eta_{n,u'} \beta_{n,u} \gamma_{n,u'}, & \varsigma \!= \MR,
        \\
         \!p_{\text{dl}}  \!\!\sum\limits_{n \in \mathcal{N}} a_{n,u'} \eta_{n,u'} \frac{(\beta_{n,u}-\gamma_{n,u})}{(L-U-M)\mathbf\gamma_{n,u'}}, & \varsigma \!= \ZF,
    \end{cases}
\end{align}
and
\vspace{-0.5em}
\begin{align}
\label{eq:miuma}
 & \mathbb{E} \left\{\left| \text{MI}_{u,m}^\varsigma \right|^2\right\}=p_{\text{dl}} \mathbb{E} \left\{\left|\sum_{n \in \mathcal{N}} \bar{a}_{n,m} \sqrt{\bar{\eta}_{n,m}} \mathbf{c}_{n,u}^H  \bar{\boldsymbol{b}}_{n,m}^\varsigma \right|^2 \right\}
\nonumber\\ 
&=  p_{\text{dl}}  \sum_{n \in \mathcal{N}} \bar{a}_{n,m} \bar{\eta}_{n,m} \bigg(\mathbb{E}\left\{\left|\mathbf{\Tilde{c}}_{n,u}^H \bar{\boldsymbol{b}}_{n,m}^\varsigma \right|^2 +\left|\mathbf{\hat{c}}_{n,u}^H \bar{\boldsymbol{b}}_{n,m}^\varsigma \right|^2 \right\}  \bigg)\nonumber\\
&=
    \begin{cases}
        p_{\text{dl}} L \sum\limits_{n \in \mathcal{N}} \bar{a}_{n,m} \bar{\eta}_{n,m}  \beta_{n,u} \zeta_{n,m}, & \varsigma=\MR,
        \\
         p_{\text{dl}} \sum\limits_{n \in \mathcal{N}} \bar{a}_{n,m} \bar{\eta}_{n,m}  \frac{(\beta_{n,u}-\gamma_{n,u})}{(L-U-M)\mathbf\zeta_{n,m}}, & \varsigma=\ZF.
    \end{cases}
\end{align}
Finally, by substituting  \eqref{eq:dsua}, \eqref{eq:buua}, \eqref{eq:uiuua}, and \eqref{eq:miuma} into \eqref{eq:sinru}, $\mathrm{SINR}_u^\MR$ and $\mathrm{SINR}_u^\ZF$ at the $u$-th unicast user in \eqref{eq:sinrumra} and \eqref{eq:sinruzfa} can be obtained.
\vspace{-0.6em}
\section{Proof of Proposition 2} \label{promkmr}
The numerator of \eqref{eq:sinrmk} is computed as
\begin{align}
\label{eq:dsmkb}
\text{DS}_{m,k}^\varsigma&=\sqrt{p_{\text{dl}}} ~\mathbb{E} \left\{\sum_{n \in \mathcal{N}}
\bar{a}_{n,m} \sqrt{\bar{\eta}_{n,m}} \mathbf{t}_{n,m,k}^H   \bar{\boldsymbol{b}}_{n,m}^\varsigma \right\}
\nonumber\\
&=\!\sqrt{p_{\text{dl}}} \!\sum_{n \in \mathcal{N}} \bar{a}_{n,m} \sqrt{\bar{\eta}_{n,m}} \mathbb{E}\left\{\mathbf{\hat{t}}_{n,m,k}^H\bar{\boldsymbol{b}}_{n,m}^\varsigma \right\}.
\end{align}
For a given precoding scheme, i.e., MR and ZF, substituting \eqref{eq:bnm} into \eqref{eq:dsmkb} yields
\begin{align}  \label{eq:dsmka}
\text{DS}_{m,k}^\varsigma=
    \begin{cases}
        \!\sqrt{p_{\text{dl}}}L\!\!\!\sum\limits_{n \in \mathcal{N}} \!\bar{a}_{n,m} \sqrt{\bar{\eta}_{n,m}} \sqrt{\zeta_{n,m}\bar{\mathbf\gamma}_{n,m,k}}, & \varsigma=\MR,
        \\
         \!\sqrt{p_{\text{dl}}}\sum\limits_{n \in \mathcal{N}} \bar{a}_{n,m} \sqrt{\bar{\eta}_{n,m}}, & \varsigma=\ZF,
    \end{cases}
\end{align} 
where \eqref{eq:dsmkb} is derived based on the uncorrelated nature of the channel estimation error $\mathbf{\Tilde{t}}_{n,m,k}$ and the multicast group channel estimation $\mathbf{\hat{t}}_{n,m}$. Moreover,   \eqref{eq:dsmka} follows from the fact that $\mathbb{E} \big\{ \mathbf{\hat{t}}_{n,m,k}^H \mathbf{\hat{t}}_{n,m} \big\}=L\sqrt{\zeta_{n,m}\bar{\mathbf\gamma}_{n,m,k}}$.

In addition, the first term in the denominator of \eqref{eq:sinrmk} is calculated as
\begin{align}
\label{eq:bumka}
 &\mathbb{E} \left\{\left|\text{BU}_{m,k}^\varsigma \right|^2 \right\}=\nonumber\\
 &p_{\text{dl}} \sum_{n \in \mathcal{N}} \bar{a}_{n,m}  \bar{\eta}_{n,m} \bigg(\mathbb{E}\left\{\left| \mathbf{t}_{n,m,k}^H \bar{\boldsymbol{b}}_{n,m}^\varsigma - \mathbb{E} \left\{ \mathbf{t}_{n,m,k}^H \bar{\boldsymbol{b}}_{n,m}^\varsigma \right\} \right|^2 \right\}\bigg)
 \nonumber\\
 &=\! p_{\text{dl}} \! \sum_{n \in \mathcal{N}} \!\!\bar{a}_{n,m}  \bar{\eta}_{n,m} \!\bigg(\!\!\mathbb{E}\left\{\left|\mathbf{\Tilde{t}}_{n,m,k}^H \bar{\boldsymbol{b}}_{n,m}^\varsigma\right|^2\right\}+\mathbb{E}\left\{\left|\mathbf{\hat{t}}_{n,m,k}^H\bar{\boldsymbol{b}}_{n,m}^\varsigma\right|^2\right\} \nonumber\\
 & \hspace{12 em} - \left|\mathbb{E} \left\{  \mathbf{t}_{n,m,k}^H \bar{\boldsymbol{b}}_{n,m}^\varsigma \right\}\right|^2 \bigg)\nonumber\\
& =\!
    \begin{cases}
        \!p_{\text{dl}}  L\!\!\sum\limits_{n \in \mathcal{N}} \!\!\bar{a}_{n,m} \bar{\eta}_{n,m}\bar{\beta}_{n,m,k}\zeta_{n,m}, &\!\!\varsigma\!=\!\MR,
        \\
        \!p_{\text{dl}} \!\! \sum\limits_{n \in \mathcal{N}} \!\!\bar{a}_{n,m}  \bar{\eta}_{n,m} \frac{\bar{\beta}_{n,m,k}-\bar{\mathbf\gamma}_{n,m,k}}{(L-U-M)\mathbf\zeta_{n,m}}, &\!\!\varsigma\!=\!\ZF.
    \end{cases}
\end{align}
The calculation in \eqref{eq:bumka} for the MR scheme is based on the fact that $\mathbb{E}\big\{|\mathbf{\Tilde{t}}_{n,m,k}^H \mathbf{\hat{t}}_{n,m}|^2\big\}=L~\zeta_{n,m} (\bar{\beta}_{n,m,k}-\bar{\mathbf\gamma}_{n,m,k})$ and $\mathbb{E}\big\{|\mathbf{\hat{t}}_{n,m,k}^H\mathbf{\hat{t}}_{n,m}|^2\big\}=L(L+1) \bar{\mathbf\gamma}_{n,m,k}\zeta_{n,m}$, while for the ZF scheme, we use  $\mathbb{E} \left\{ (\bar{\boldsymbol{b}}_{n,m}^\ZF)^H\bar{\boldsymbol{b}}_{n,m}^\ZF \right\}=\frac{1}{(L-U-M)\mathbf\zeta_{n,m}}$.
Following similar steps,  $\mathbb{E} \big\{|\text{MI}_{m,k,m'}^\varsigma |^2\big\}$ and  $\mathbb{E} \big\{|\text{UI}_{m,k,u}^\varsigma |^2\big\}$ can be calculated, respectively, as
\begin{align}
\label{eq:mimkma}
&\mathbb{E} \left\{\left|\text{MI}_{m,k,m'}^\varsigma \right|^2\!\right\}\!=\!p_{\text{dl}} \mathbb{E} \left\{\left| \sum_{n \in \mathcal{N}} \bar{a}_{n,m'} \sqrt{\bar{\eta}_{n,m'}} \mathbf{t}_{n,m,k}^H  \bar{\boldsymbol{b}}_{n,m'}^\varsigma \right|^2 \right\}
\nonumber\\
&=\! p_{\text{dl}} \!\! \sum_{n \in \mathcal{N}} \!\!\bar{a}_{n,m'}  \bar{\eta}_{n,m'} \bigg(\mathbb{E}\left\{\left|\mathbf{\Tilde{t}}_{n,m,k}^H \bar{\boldsymbol{b}}_{n,m'}^\varsigma\right|^2 + \left|\mathbf{\hat{t}}_{n,m,k}^H\bar{\boldsymbol{b}}_{n,m'}^\varsigma \right|^2\right\}\bigg) 
\nonumber\\ 
&=
    \begin{cases}
        p_{\text{dl}} L\sum\limits_{n \in \mathcal{N}} \bar{a}_{n,m'} \bar{\eta}_{n,m'} \bar{\beta}_{n,m,k}  \zeta_{n,m'}, & \varsigma=\MR,
        \\
        p_{\text{dl}}  \sum\limits_{n \in \mathcal{N}} \bar{a}_{n,m'}  \bar{\eta}_{n,m'} \frac{\bar{\beta}_{n,m,k}-\bar{\mathbf\gamma}_{n,m,k}}{(L-U-M)\mathbf\zeta_{n,m'}}, & \varsigma=\ZF,
    \end{cases}
\end{align}
and
\vspace{-0.5em}
\begin{align}
\label{eq:uimkua}
& \mathbb{E} \left\{\left|\text{UI}_{m,k,u}^\varsigma \right|^2\right\} = p_{\text{dl}} \mathbb{E} \left\{\left| \sum_{n \in \mathcal{N}} a_{n,u} \sqrt{\eta_{n,u}} \mathbf{t}_{n,m,k}^H  \boldsymbol{b}_{n,u}^\varsigma \right|^2 \right\} 
\nonumber\\
&= p_{\text{dl}}  \sum_{n \in \mathcal{N}} a_{n,u}  \eta_{n,u} \bigg(\mathbb{E}\left\{\left|\mathbf{\Tilde{t}}_{n,m,k}^H \boldsymbol{b}_{n,u}^\varsigma  + \mathbf{\hat{t}}_{n,m,k}^H \boldsymbol{b}_{n,u}^\varsigma \right|^2\right\} \bigg) 
\nonumber\\
& =
    \begin{cases}
        p_{\text{dl}} ~L\sum\limits_{n \in \mathcal{N}} a_{n,u} \eta_{n,u} \bar{\beta}_{n,m,k} \gamma_{n,u}, & \varsigma=\MR,
        \\
        p_{\text{dl}}  \sum\limits_{n \in \mathcal{N}} a_{n,u}  \eta_{n,u} \frac{\bar{\beta}_{n,m,k}-\bar{\mathbf\gamma}_{n,m,k}}{(L-U-M)\mathbf\gamma_{n,u}}, & \varsigma=\ZF.
    \end{cases}
\end{align}
Plugging \eqref{eq:dsmka}, \eqref{eq:bumka}, \eqref{eq:mimkma}, and \eqref{eq:uimkua} into \eqref{eq:sinrmk},  gives the $\mathrm{SINR}_{m,k}^\MR$ and $\mathrm{SINR}_{m,k}^\ZF$ at the $k_m$-th multicast user in \eqref{eq:sinrmkmra} and \eqref{eq:sinrmkzfa}, respectively. 
\bibliographystyle{IEEEtran}
\bibliography{bibliography}

\end{document}